\documentclass[aoas,preprint]{imsart}

\RequirePackage[OT1]{fontenc}
\RequirePackage{graphicx}
\RequirePackage{amsthm}
\RequirePackage[cmex10]{amsmath}
\RequirePackage{natbib}
\RequirePackage[colorlinks,citecolor=blue,urlcolor=blue]{hyperref}
\RequirePackage{hypernat}

\RequirePackage[OT1]{fontenc}
\RequirePackage{amsthm,amsmath}
\usepackage{bm,graphicx,caption,subcaption,color}


\startlocaldefs
\numberwithin{equation}{section}
\theoremstyle{plain}

\endlocaldefs

\newcommand{\bfun}{\left\{\begin{array}{ll}}
\newcommand{\efun}{\end{array}\right.}

\newtheorem{lemma}{Lemma}

\startlocaldefs
\newcommand{\comment}[1]{}
\numberwithin{equation}{section}
\theoremstyle{plain}
\endlocaldefs

\begin{document}

\begin{frontmatter}
\title{Hierarchical Semi-parametric Duration Models}
\runtitle{Hierarchical Semi-parametric Duration Models}

\begin{aug}
\author{\fnms{Mingyu} \snm{Tang} \ead[label=e1]{mingyut@cmu.edu}}
\and
\author{\fnms{Mark} \snm{Schervish} \ead[label=e2]{mark@cmu.edu}
}

\affiliation{Department of Statistics, Carnegie Mellon University}

\address{Baker Hall 232, 5000 Forbes Avenue, Pittsburgh, PA, 15213\\
\printead{e1}\\
\phantom{E-mail:\ }\printead*{e2}}

\end{aug}

\begin{abstract}
This research attempts to model the stochastic process
of trades in a limit order book market as
a marked point process.  We propose a semi-parametric model for the
conditional distribution given the past, attempting to capture the effect of
the recent past in a nonparametric way and the effect of the more
distant past using a parametric time series model. Our framework
provides more flexibility than the most commonly used family of
models, known as Autoregressive Conditional Duration (ACD), in terms
of the shape of the density of durations and in the form of dependence
across time. We also propose an online learning algorithm for intraday
trends that vary from day to day.  This allows us both to do prediction of
future trade times and to incorporate the effects of
additional explanatory variables. In this paper, we show that the framework works better
than the ACD family both in the sense of prediction log-likelihood and
according to various diagnostic tests using data from the New York
Stock Exchange. In general, the framework can be used both to estimate
the intensity of a point process, and to estimate a the joint density
of a time series.
\end{abstract}

\begin{keyword}
\kwd{microstructure}
\kwd{duration model}
\kwd{semiparametric}
\end{keyword}

\end{frontmatter}

\section{Introduction}

In today's financial world, most markets rely on a {\em limit order
  book} (LOB) to match buyers and sellers. High frequency traders and
market makers rely on real-time access to the LOB in order to
implement their trading algorithms. 
Researchers, traders and regulators are interested in the dynamics of
LOB for their own individual reasons.
How stocks trade on the NYSE is discussed by Schwartz (1993) 
\cite{NYSE1} and Hasbrouck, Sofianos and Sosebee (1993) \cite{NYSE2}.

A common feature of the dynamics of each LOB  is clustering of events.
Each time that an event occurs, the likelihood increases that another
event will occur in the near figure.  Processes with this feature are
called {\em self-exciting}, and the lengths of consecutive durations
(times between events) tend to be similar to each other.  Two main
types of models for 
self-exciting processes have been developed over the years.
One type consists of {\em intensity models}, and the other consists of
{\em duration models}. Intensity models focus on the conditional intensity
function, which gives the instantaneous conditional probability of an event
at each time given the history of the
process. Bauwens and Hautsch (2006) \cite{ABC5} give a good survey of
current intensity models. The most common intensity
model is the Hawkes process, introduced by Hawkes (1971)
\cite{HAWKES}, in which the conditional intensity function is modeled
as a linear combination of the effects of all of the past events. The
effect of each past event is modeled as a decaying function
of the time elapsed since that event. Zhao's thesis
(2010) \cite{ABC6} proposed another intensity model, in which the
conditional intensity function depends on the number of events in the
most recent 
past time window of a fixed length. Duration models were
popularized by Engle and Russel (1997, 1998) \cite{ACD} who introduced the
{\em autoregressive conditional duration} (ACD) model.  The ACD model
specifies that the conditional mean of the next duration has an
autoregressive structure 
so as to capture the self-exciting effect. There is rich literature
extending the ACD model to other duration models. Pacurar (2006)
\cite{ACDsurvey} has given a comprehensive survey on the development 
of ACD models. In particular, Bauwens and Veredas (2004) \cite{SCD}
proposed a stochastic conditional duration (SCD) model, in which they
introduce a stochastic noise in the autoregressive formula for the
expectation of duration. 
Also, the log ACD model \cite{logACD}, the threshold ACD model
\cite{TACD}, the Markov Switching ACD model \cite{MSACD}, the
stochastic volatility duration (SVD) model \cite{SVD}, and the fractionally
integrated ACD model \cite{FIACD} are all designed to generalize and
improve the original ACD model in various ways.  
Furthermore, Russell (1999) \cite{ACI} has developed an intensity model 
based on the ACD which is called the autoregressive conditional
intensity (ACI) model.

Although models in the ACD family successfully capture the
self-exciting features of the duration processes, Bauwens, Giot,
Grammig and Veredas (2000) \cite{COMPARE} have shown that none of the
parametric ACD models can pass a model evaluation criterion that was
proposed by Diebold, Gunther and Tay (1998) \cite{denFORCAST} and is based on
the probability integral transform theorem. The criterion will be
described in Section~\ref{subsec:diagnostics}. In addition, all of the
models incorporate an intraday trend.  Typically, a spline is fit to
the durations as a preprocessing step,
and then the durations are divided by the fitted
spline to remove the trend.  In order to  predict durations
on one day given what one learns from the previous day, it is useful
to have a model for how the intraday trend varies from day to day.

In this article, we develop a semiparametric duration model for the dynamics of
trade flow.  We combine nonparametric
conditional density estimation, parametric time series models and an
online learning of the intraday trend.  Since trades occur at 
irregular times throughout the day, the trade duration process is
typically characterized as a marked point process, where the trades
are target events and the other associated features, including price,
spread, trade side, and other features of the LOB are marks. 
The main contributions of this article lie in a new and more precise
model of the dynamics of the trade duration process as well as a new
way to deal with the intraday trend for the purposes of prediction.

The rest of the paper is organized as follows. In
Section~\ref{sec:acd}, we review the ACD model and some of its variants. 
Section~\ref{sec:hsdm}, introduces our semiparametric framework for
modeling the duration process, including our estimation procedures.
Section~\ref{sec:expres} presents experimental results
on LOB data  from the NYSE and compares our
framework with  four models from the ACD family. 
Section~\ref{sec:disc} summarizes our results and puts them into perspective.


\section{Review of ACD models}\label{sec:acd}
In this section, we briefly review the ACD model along with some of its
variants and analyze their limitations.  These limitations serve as
the inspiration for our semiparametric model.  
Let $X_i$ denote the elapsed time (duration) between two consecutive
events at times $time_{i-1}$ and $time_i$, i.e. $X_i = time_i - time_{i-1}$, with
$time_0$ being the time at which observation begins.
An ACD model attempts to capture the time dependence in the duration
process by modeling the conditional expectation of the next duration
given the past, i.e. $E(X_i | \mathcal{F}_{i-1})$, where $\mathcal{F}_{i-1}$
denotes the information available up to time $time_{i-1}$. A common ACD model
is:
\begin{eqnarray} 
\label{eqn:acd1}
X_i& =& \Psi_i \epsilon_i,\\
\label{eqn:acd2}
\Psi_i &=& \omega + \alpha X_{i-1} + \beta \Psi_{i-1},
\end{eqnarray} 
where $\{\epsilon_i\}_{i\geq1}$ is a process of IID positive random
variables with mean 1, $\omega>0$, $\alpha>0$ and $\beta>0$ are
parameters with $\alpha+\beta<1$ (to allow the $\Psi_i$ to have a
common mean.)  Thus, $E(X_i | \mathcal{F}_{i-1}) = \Psi_i$.
The particular model specified above is called ACD(1,1) because of the
introduction of one lag for both $X$ and $\Psi$ in (\ref{eqn:acd2}).
The distribution of $\epsilon_i$ is assumed to be from a parametric
family with a long tail. Common choices include, Gamma, Weibull and
Burr families. 

\subsection{Additional Explanatory Variables in ACD}

In a market microstructure data set, such as our NYSE data set, events
are usually associated with some additional explanatory variables,
such as volume, spread, price and so forth. These variables can be
characterized as marks in the point process and they can have an
impact on the intensity of the process. In the original ACD model and its
variants, the effects of additional explanatory variables are
incorporated by modifying the autoregressive formula (\ref{eqn:acd2}) to
\begin{equation}\label{eqn:AEV}
\Psi_i = \omega + \alpha X_{i-1} + \beta \Psi_{i-1} + \delta^T u_i,
\end{equation}
where, $u_i$ is a vector of additional explanatory variables and
$\delta$ is a vector of coefficients. Such a specification
indicates that the additional variables affect the distribution of
durations by a scale change.

\subsection{ACD Variants}
Researchers have created variations of the ACD model of two
main types.  One type of variation modifies the autoregressive formula
(\ref{eqn:acd2}). The  Log-ACD model by 
Bauwens and Giot (2000) \cite{logACD} replaces (\ref{eqn:acd2}) by
\[\log(\Psi_i) = \omega + \alpha\log(X_{i-1}) + \beta\log(\Psi_{i-1}),\]
which, unlike (\ref{eqn:acd2}), requires no additional restrictions in
order to guarantee that $\Psi_i>0$.  The Stochastic
Duration model \cite{SCD} introduces a random noise in  (\ref{eqn:acd2})
to allow $\Psi_i$ to be a non-deterministic function of the past, as
in
\[\Psi_i = \omega + \alpha X_{i-1} + \beta \Psi_{i-1} + u_i,\]
where $u_i$ is IID Gaussian noise. The Fractional Integrated ACD model
\cite{FIACD} introduced a differencing in order to capture the long
memory of the duration sequence.  This model will be described in more
detail in Section~\ref{sec:bench} because we use it as a benchmark for
comparison to our model.  The Threshold ACD \cite{TACD},
\[   \Psi_i = \left\{ 
     \begin{array}{lr} 
       \omega_1 + \alpha_1 X_{i-1} + \beta_1 \Psi_{i-1}, &
       \mbox{if \ $0 < X_{i-1} \leq r_1$,} \\ 
       \omega_2 + \alpha_2 X_{i-1} + \beta_2 \Psi_{i-1}, &
       \mbox{if $r_1 < X_{i-1} \leq r_2$,} \\ 
       \omega_3 + \alpha_3 X_{i-1} + \beta_3 \Psi_{i-1},
       &\mbox{if $r_2 < X_{i-1} < \infty$.}\end{array}\right.,\] 
allows different dependence in different
regimes. 

The second type of variation is to allow more general
distributions for $\epsilon$ in (\ref{eqn:acd1}). Traditionally, $\epsilon$
is assumed to have a long-tailed distribution with mean 1.  A semiparametric
version of the ACD model \cite{FIACD} estimates the parameters in
(\ref{eqn:acd2}) by quasi-maximum likelihood \cite{QML} and then
estimates the distribution of the residual $\epsilon$
nonparametrically. When we compare our model with
ACD variants, we will include both the parametric ACD with exponential
distribution and the semiparametric ACD using kernel density estimation to
estimate the distribution of $\epsilon$ nonparametrically.  

\subsection{Limitations of ACD Models}\label{sec:limit}
In parametric ACD models and their variants, there is a strong parametric
assumption on the distribution of the $\epsilon$ process. Gamma and
Weibull distributions are the most common choices, and both of these
include exponential distributions as special cases. However, as we
will show in Section~\ref{sec:hsdm}, the trade durations do not
admit such an 
ideal parametric distribution. Furthermore, the empirical distribution
of log-durations appears to be bimodal, which undermines the
performance of any model that relies on a parametric family of
distributions for log-durations. In
  nonparametric versions of the ACD model, Gaussian kernel
  density estimation also performs poorly because of the long tail of
  the distribution of the residual $\epsilon$. 

Another important restriction on ACD models and their variants is that
the time dependency is incorporated only in the expectation (which
happens to be the same as the scale because of the form of the model) of the
duration distribution. However, as we will show in our experimental
results in Section~\ref{sec:expres},
the previous trade duration affects the ensuing trade duration in
a more general way. In particular, it changes both the locations and
the relative sizes of the two modes of the bimodal conditional
distribution of durations. Therefore, modeling the time 
dependency solely in terms of the duration mean/scale cannot capture the
more general effect of previous durations. Although some of the extensions
are designed to overcome this shortcoming,  they are still not
flexible enough to capture the effects on the modes of the distribution.


\section{ Hierarchical Semi-parametric Duration Model
  (HSDM)} \label{sec:hsdm} 
In this section, we propose a semiparametric model for the
conditional distribution of durations.  The model also allows
estimation of a conditional intensity function.  Suppose that we
observe a series 
of occurrence times $\{time_i\}_{i\geq0}$ from a point process. By  taking the
differences of successive occurrence times, we get a series of
durations $\{\Delta time_i\}_{i\geq1}$, where $\Delta time_i=time_i-time_{i-1}$.
Suppose that the conditional density 
function (given the past) for $\Delta time_i$ is $p_i(t)$ with CDF 
$P_i(t)$. From the viewpoint of point processes, the conditional
intensity function $\lambda$ is defined as: 
\[\lambda(t | \mathcal{F}_t) = \lim_{\Delta t\rightarrow 0}
\frac{1}{\Delta t} \Pr(\mbox{One\ event\ occurs\ in\ the\ time\
  interval} \ [t, t+\Delta t] | \mathcal{F}_t)\] 
In a duration-based point process, the estimated intensity $\hat
\lambda_t$ is essentially the hazard function, which can be obtained
by $\hat haz_i(\cdot) = \hat p_i(\cdot) / (1 - \hat P_i(\cdot) )$. 

Since the duration has a long tail, we transform to the log scale in
the rest of the article and use $\{T_i\}_{i\geq1}$ to denote the logarithms of
the durations.  The logarithms of durations are critical in that
  the original durations have a distribution with a long tail, which
  makes Gaussian kernel density estimation perform poorly. Instead,
 kernel density estimation on the log scale is
  equivalent to  kernel density estimation with an asymmetric
  bandwidth that increases the farther one
  gets into the tail of the original scale.  This ensures that the asymmetric
  long-tailed distribution is captured well. Throughout the paper, the
  analysis of HSDM focuses on logarithms of durations.

We let $f_i(\cdot)$ denote the density of $T_i$ given the past, and
$F_i(\cdot)$ denotes its 
CDF. Naturally, $p_i(\cdot)$ and $P_i(\cdot)$ can be derived directly from
$f_i(\cdot)$ and $F_i(\cdot)$ and vice-versa. Therefore, our goal is
equivalent to estimating the density function 
$f_i(\cdot)$ of $T_i $ given $\{T_j\}_{j=1}^{i-1}$.  
In what follows, we present the model, a corresponding estimation method, and a
prediction algorithm for future events. 

\subsection{Model} \label{sec:DGP}
In this section, we describe a model for general point processes. In
Section~\ref{subsec:estimation}, 
we give the specific version that we use for trade duration processes
along with the steps needed to fit the model.
In general, the duration sequence comes from a multi-layer
hierarchical model as shown in Figure~\ref{fig:DGP1},

\begin{figure}[htb]
\begin{center}
\begin{subfigure}{0.4\textwidth}
\caption{}
\includegraphics[width = \textwidth, height = 3.0in ]{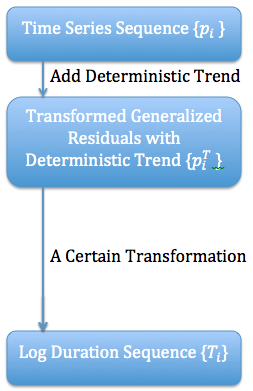}
\label{fig:DGP1}
\end{subfigure}
~
\begin{subfigure}{0.4\textwidth}
\caption{}
\includegraphics[width = \textwidth, height = 3.0in]{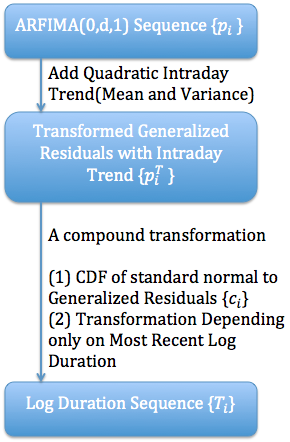}
\label{fig:DGP2}
\end{subfigure}
\end{center}
\footnotesize
\caption{(a) General form of the model, and (b) specific version for trade
  durations.}
\label{fig:DGP}
\end{figure}

with Figure~\ref{fig:DGP2} showing more specific information for the
model that we use for trade durations. The process that we used for
choosing the components of the model is described in
Section~\ref{sec:modelbuild}. 

\begin{enumerate}
\item Underlying the process is a latent process $\{p_i\}_{i\geq1}$
  that captures the long-memory dependence.  The latent process is
  modeled by a parametric time series model, e.g.  autoregressive
  moving average (ARMA) or autoregressive fractionally integrated
  moving average  (ARFIMA). If one needs to incorporate
additional explanatory variables, one can
augment the time series model with a regression component.  Descriptions of
ARFIMA models and the augmentation that we use for regression are is given in
Appendix~\ref{sec:arfima}. 
\item A general time trend can be modeled so that the distribution of 
duration depends on both clock time and event time.  Let
$R(\cdot|time_{i-1})$ be a one-to-one, clock-time dependent, function of a
real variable, where $time_{i-1}$ stands for the clock time at which event $i-1$
occurs. Applying the transformation produces the transformed process 
\[p^T_i = R(p_i|time_{i-1}).\]
\item Finally, the log-duration $T_i$ is a general past-dependent
  transformation of $p^T_i$, $T_i=H_i(p^T_i)$. 
\end{enumerate}
Combining the above levels of the hierarchy, the distribution of
$\{T_i\}_{i\geq1}$ can be written in terms of the distribution of the
latent process $\{p_i\}_{i\geq1}$.  Let $G_i(\cdot)$ denote the
conditional distribution function of $p_i$ given the past.  Then, the
conditional distribution of $T_i$ given the past has CDF
\begin{equation}\label{eq:tcdf}
F_i^*(t)=\Pr(T_i\leq t)=\Pr(p_i^T\leq
H_i^{-1}(t))=G_i\left(R^{-1}[H_i^{-1}(t)|time_{i-1}]\right). 
\end{equation} 

\subsection{Estimation Method} \label{subsec:estimation}
Our proposed semiparametric estimation method proceeds by reversing the
steps in the  data generating process described above.

\begin{enumerate} 
\item First, express the general transformation $H_i$ as
  $H_i(\cdot)=F_i^{-1}(\Phi(\cdot))$, where $F_i(\cdot)$ is a general
  cumulative distribution function (CDF) that depends on the past and
  $\Phi$ is the standard normal CDF.  For trade durations, we
 find a nonparametric kernel estimator of
  $H_i$ as follows.  Compute a  conditional density estimator
  $\hat f_i (\cdot)$ for the log-durations $T_i$ given the previous
  log-duration $T_{i-1}$, and form the
  corresponding CDF, $\hat F_i (\cdot)=\hat F(\cdot|T_{i-1})$. \label{step1} 
Calculate the {\em generalized residuals} $c_i = \hat F_i (T_i)$
  and the {\em transformed generalized residuals} $\hat{p}^T_i =
  \Phi^{-1}(c_i)$. 
\item\label{step2} Model the clock-time dependent trend.  For trade
  durations, we find that an intraday trend is both useful and
  meaningful.  The specific form we use is
\[R(p_i|time_{i-1})=p_i\tau_{\rm sd}(time_{i-1}) + \tau_{\rm mean}(time_{i-1}),\]
where $\tau_{\rm mean}$ and  $\tau_{\rm sd}$ are respectively
functions of clock time that model changes in the mean and 
  standard deviation of 
  $\{\hat{p}^T_i\}_{i\geq1}$.
Estimate the trends  as  $\hat\tau_{\rm mean}(time_{i-1})$ and $\hat\tau_{\rm
    sd}(time_{i-1})$, and then
  calculate the detrended sequence 
\[\hat{p}_i = \hat{R}^{-1}[\hat{p}^T_i|time_{i-1}]=\frac{\hat{p}^T_i-
\hat{\tau}_{\rm mean}(time_{i-1})}{\hat{\tau}_{\rm
    sd}(time_{i-1})}.\]
Specifically, we let both $\tau_{\rm mean}$ and $\tau^2_{\rm sd}$ be
quadratic functions of clock time as  described in more detail below.
\item Fit a parametric time series model to the detrended transformed
  generalized residuals: $TSmodel(\hat{p}_i |\{\hat{p}_j\}_{j=1}^{i-1})$. 
The fitted time series model will predict that each $\hat p_i$ given the
past has a normal distribution with mean $\hat\mu_i$ and standard
deviation $\hat\sigma_i$.
 If additional explanatory variables are needed, an
  appropriate modification is done at this
  step. Section~\ref{sec:arfima} gives more details.  \label{step3} 
\end{enumerate} 
After the fit, transform back to the log-duration scale.
The fitted value corresponding to the $i$th log-duration is
\[\hat{T}_i=\hat
F_i^{-1}\left[\Phi\left(\frac{\hat{p}_i-\hat{\mu}_i}{\hat{\sigma}_i}
\right)\right].\] 
In step~\ref{step1}, $\hat{F}_i$ can be any
past-dependent CDF.  For trade durations, we try to capture
a general form of the most important dependence, specifically the dependence
on the previous log-duration, $T_{i-1}$.  So, we use a nonparametric
conditional density estimator for the density of $T_i$ given
$T_{i-1}$, and convert the density into its corresponding CDF.
If $\hat{F}_i$ were indeed the conditional CDF of
$T_i$ given the past, then the generalized residuals
$\{c_i\}_{i\geq1}$ would be independent 
uniform random variables on the interval $(0,1)$, and
$\hat{p}^T_i=\Phi^{-1}(c_i)$ would be independent standard normal
random variables.  Of course, empirical evidence with trade durations
suggests that the distribution of $\{\hat{p}^T_i\}_{i\geq1}$ is much
more complicated, having both time-varying mean and time-varying
standard deviation, not to mention long memory.  

In step~\ref{step2}, we detrend the $\hat{p}^T_i$.  Both the mean and
variance appear to be large in the middle of the day and smaller at
the start and end of the day. So, we fit a quadratic trend for each:
\[\tau_{\rm mean}(t)=at^2+bt+c,\quad\tau^2_{\rm sd}(t)=dt^2+et+f.\]
Our model says that the 
\[p_i=\frac{p_i^T-\tau_{\rm mean}(time_{i-1})}{\tau_{\rm sd}(time_{i-1})},\]
given the past, are a normally distributed process with time-series
structure.   We fit the trend parameters $\eta = (a,b,c,d,e,f)$ using
quasi-maximum likelihood. The log-quasi-likelihood function that we use is 
\begin{equation}\label{eq:thetall}
-\sum_{i=2}^n\left[\log\tau_{\rm sd}(time_{i-1})+\frac{[\hat{p}_i^T-
\tau_{\rm mean}(time_{i-1})]^2}{2\tau_{\rm sd}(time_{i-1})^2}\right],
\end{equation} 
where $n$ is the number of durations in the day.  The function in
(\ref{eq:thetall}) would be the likelihood function if the
$\hat{p}_i$ were independent rather than following a time-series model.

In step~\ref{step3}, we fit an appropriate time series model to the
$\{\hat{p}_i\}_{i\geq1}$ sequence assuming that the noise terms are
normally distributed.  With trade duration data, we fit an ARFIMA
model with orders chosen by BIC. Each time series model then says that  
the distribution of $p_i$ is the normal distribution with a fitted
mean $\hat \mu_i$ and fitted variance $\hat \sigma_i^2$ determined by
the specific model.

Instead of maximizing the quasi-log-likelihood (\ref{eq:thetall}), we
could attempt to find the joint MLE of $\eta$
and the parameters of the ARFIMA model. The log-likelihood for both sets of
parameters is not  (\ref{eq:thetall}), but rather
\begin{equation}\label{eq:thetall2}
-\sum_{i=2}^n\left[\log[ \tau_{\rm sd}(time_{i-1}) \sigma_i
  ]+\frac{[\hat{p}_i^T- 
\tau_{\rm mean}(time_{i-1}) - \tau_{sd}(time_{i-1})
\mu_i]^2}{2\tau_{\rm sd}(time_{i-1})^2 \sigma_i^2}\right],
\end{equation} 
where $\mu_i$ and $\sigma_i$ are functions of the ARFIMA parameters
that specify the mean and standard deviation of $p_i$ given the
past. Note that (\ref{eq:thetall}) is the 
special case of (\ref{eq:thetall2}) when $\mu_i=0$ and $\sigma_i=1$.
Starting with $\mu_i=0$ and $\sigma_i=1$, steps~\ref{step2}
and~\ref{step3} 
should be iteratively repeated, using (\ref{eq:thetall2}) in
step~\ref{step2}instead of (\ref{eq:thetall}), until the parameter estimates
converge. Specifically, after the first round of  steps~\ref{step2}
and~\ref{step3}, set $\mu_i$ and $\sigma_i$ respectively to the estimated
mean and standard 
deviation of $p_i$ given the past as fit by the ARFIMA model in
step~\ref{step3}.  For later iterations, use the estimated $\mu_i$ and $\hat
\sigma_i$ to refit the trend  parameters by maximizing the
log-likelihood in equation \ref{eq:thetall2}.  With the new estimated
trend parameters, refit the ARFIMA parameters and alternated until the
estimated parameters converge.  We compared this procedure to the
quasi-likelihood maximization described above and found that the
ARFIMA parameter estimates change negligibly (usually less than 1\%).
The trend parameters sometimes change as much as 15\%, but the
changes do not translate into noticeable changes in predictions.  (We
give more evidence of this last claim in
Section~\ref{subsec:prediction}.) The
empirical results that we report in Section~\ref{sec:expres} use the
quasi-likelihood maximization.

\subsection{Prediction}  \label{subsec:prediction}
When we wish to predict log-durations on a new day (which we will call {\em
  test data}), we start with the fitted
model based on the previous day's data (which we will call {\em
  training data}).  We carry forward the conditional CDF $\hat
F(\cdot|T_{i-1})$ and coefficients from the ARFIMA process that were
estimated from the training data. For the the intraday trend, we assume
that a new coefficient vector $\eta = (a,b,c,d,e,f)$ is needed each day.
We start by using the estimated $\hat
\eta=(a_0,b_0,c_0,d_0,e_0,f_0)$ from the training data.  In order to
perform updates to $\hat\eta$ in real 
time as events occur, a fast update is needed for $\hat{\eta}$.
We propose the following penalized least squares estimation (LSE) method.
Whenever a new pair of $(time_{j-1}, p^T_j)$
arrives from the test data, we update our estimated $\hat 
\eta$ as follows.  Choose $(\hat a, \hat b, \hat c)$ to minimize
\[\sum_{i=1}^j (p^T_i - a \times time_{i-1}^2 - b \times time_{i-1} -c )^2 +
\lambda(a-a_0)^2 +\lambda(b-b_0)^2 + \lambda(c-c_0)^2,\]
and  set 
\begin{equation}\label{eq:tauhat1}
\hat{\tau}_{\rm mean}(time_{i-1})=\hat a\times
time_{i-1}^2+\hat b \times time_{i-1}+\hat c.
\end{equation}
Then choose $(\hat d,
\hat e, \hat f)$ to minimize 
\[\sum_{i=1}^j [ (p^T_i-\hat \tau_{\rm mean}(time_{i-1}))^2 - d
\times time_{i-1}^2 - e \times time_{i-1} -f ]^2 + \lambda(d-d_0)^2 +
\lambda(e-e_0)^2 + \lambda(f-f_0)^2,\]
and  set 
\begin{equation}\label{eq:tauhat2}
\hat{\tau}_{\rm sd}(time_{i-1})=\sqrt{\hat d\times
time_{i-1}^2+\hat e \times time_{i-1}+\hat f.}
\end{equation} 
Empirical results depend little on the value of $\lambda$ for
$\lambda$ in a large interval. We use $\lambda= 10$ in our calculations.

A more time-consuming, but perhaps more principled, method of updating
the trend parameters would be to compute the posterior mode (PM) after
each trade event.  We could use the negatives of the penalizations in
the LSE method log-priors and maximize the sum of those log-priors and
the log-likelihood (\ref{eq:thetall2}).
The quality of the LSE approximation, compared to PM is illustrated in 
Figures~\ref{fig:PMvsLSE} and~\ref{fig:individualLL_LSEvsPM}.  
We see that the two update methods produce intraday trends that
are very similar with predictions of comparable quality.  Because LSE
works many times faster 
than PM, we use LSE for prediction in the remainder of the paper.
\begin{figure}[htb]
\begin{center}
\includegraphics[width = 1.0\textwidth, height = 3.0in]{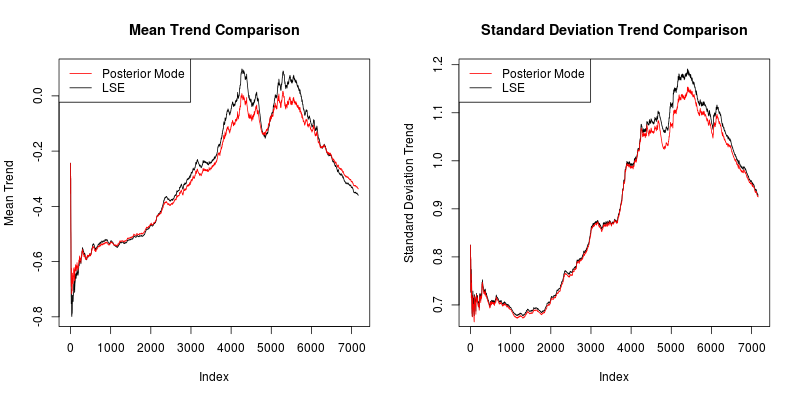}
\caption{Estimated intraday trends in both mean and standard
  deviation for one stock on one day. The black line gives the
  estimates based on LSE, and the red
  line gives the estimates based on PM.}\label{fig:PMvsLSE}
\end{center}
\end{figure}

\begin{figure}[htb]
\begin{center}
\includegraphics[width = 0.7\textwidth, height = 3.0in]{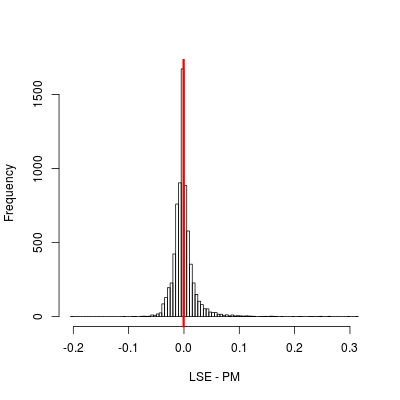}
\caption{Histogram of differences of individual prediction
  log-likelihoods (one for each predicted duration) between LSE and PM for JPM  
on one day. The mean and median are both close to 0, and the average
of the individual prediction log-likelihoods is $-2.15$, so the
percentage differences are also small.  The other stocks and days have
similar patterns. }\label{fig:individualLL_LSEvsPM} 
\end{center}
\end{figure}

In Section~\ref{sec:expres}, we evaluate our model
fit and compare it to the fits of other models.  We fit all models
using training data and then base the evaluations and
comparisons on test data. We use the prediction
log-likelihood and goodness-of-fit tests based on generalized residuals.

\subsection{Predictive Distribution and Log-Likelihood}\label{sec:preddist}
In this section, we show how to use estimates from the training data
along with the continuously updated intraday trend described above to
compute the predictive distribution of test data.

The fitted conditional CDF of the next $T_i$ in the test data given
the past can be constructed using 
(\ref{eq:tcdf}).  The time series model says that our estimate of the
conditional CDF $G_i$ is $\hat
G_i(u)=\Phi([u-\hat\mu_i]/\hat\sigma_i)$, where $\hat\mu_i$ and
$\hat\sigma_i$ are based on the estimated ARFIMA parameters from the
training data along with the past durations in the test data.
Our estimate of $H_i$ from the training data is $\hat H_i=\hat
F_i^{-1}(\Phi)$, so the estimated conditional CDF of $T_i$ is
\begin{equation}\label{eq:estcdf}
\hat{F}_i^*(t)=\Phi\left(\frac{1}{\hat\sigma_i}\left[
\frac{\Phi^{-1}(\hat F_i(t))-\hat\tau_{\rm mean}(time_{i-1})}
{\hat\tau_{\rm sd}(time_{i-1})}-\hat{\mu}_i\right]\right),
\end{equation} 
where $\hat{\tau}_{\rm mean}$ and $\hat{\tau}_{\rm sd}$ come from
(\ref{eq:tauhat1}) and (\ref{eq:tauhat2}) respectively, and are
recomputed each time that a new event occurs.

In order to compute the prediction log-likelihood of the test data, we
need the density corresponding to the CDF in (\ref{eq:estcdf}).  This
is obtained by standard calculus operations as
\begin{equation}\label{eq:estdens}
\hat{f}_i^*(t)=\frac{1}{\hat{\tau}_{\rm sd}(time_{i-1})}
\phi_{\hat\mu_i,\hat\sigma_i}\left(\frac{\Phi^{-1}(\hat F_i(t))
-\hat\tau_{\rm mean}(time_{i-1})}{\hat\tau_{\rm sd}(time_{i-1})}\right)
\times\frac{\hat{f}_i(t)}{\phi_{0,1}(\Phi^{-1}(\hat{F}_i(t)))},
\end{equation} 
where, $\phi_{m,s}(\cdot)$ denotes the density of the normal distribution
with mean $m$ and standard deviation $s$, and $\hat{f}_i$ is the
density that corresponds to $\hat{F}_i$.  
The predictionlog-likelihood for the test data $\{T_i\}_{i=1}^n$ is 
$\sum_{i=1}^n\log (\hat f_i^*(T_i))$.

The form (\ref{eq:estdens})
has a convenient interpretation for comparing our model to some
submodels.  For example, if we ignore the intraday trend, then we just
set $\hat{\tau}_{\rm mean}(\cdot)\equiv0$ and $\hat{\tau}_{\rm sd}(\cdot)\equiv1$.
If we wish to ignore the ARFIMA modeling, we just set $\hat\mu_i=0$
and $\hat\sigma_i=1$.

The {\em final generalized residual} corresponding to $T_i$ from the
test data is computed by substituting $T_i$ for $t$ in (\ref{eq:estcdf}):
\begin{equation}\label{eq:fgr}
c^*_i = \hat F_i^*(T_i) =\Phi\left(\frac{1}{\hat\sigma_i}\left[
\frac{\Phi^{-1}(\hat F_i(T_i))-\hat\tau_{\rm mean}(time_{i-1})}
{\hat\tau_{\rm sd}(time_{i-1})}-\hat{\mu}_i\right]\right),
\end{equation} 
If the model fits well, then the $c^*_i$ should look like a sample of
independent uniform random variables on the interval $(0,1)$.
We will present goodness-of-fit tests 
results based on $\{c^*_i\}_{i=1}^n$ in Section~\ref{sec:expres}. 
\comment{This log-likelihood can be
written as the sum of three terms that correspond to the three steps
in our estimation procedure and the three levels of the hierarchical model.
To make the formulas cleaner, define
\begin{eqnarray*}
\hat{p}^T_i&=&\Phi^{-1}(\hat F_i(T_i)),\\
\hat{p}_i&=&=\frac{\hat{p}_i^T-\hat{\tau}_{\rm mean}(time_{i-1})}
{\tau_{\rm sd}(time_{i-1})},\\
\hat{e}_i&=&\frac{\hat{p}_i-\hat\mu_i}{\hat\sigma_i},
\end{eqnarray*} 
so that $c^*_i=\Phi(\hat{e}_i)$.   Then
\begin{eqnarray}\nonumber
\sum_{i=1}^n\log (\hat f_i^*(T_i))&=&\sum_{i=1}^n\log(\hat{f}_i(T_i))
+\sum_{i=1}^n\log\left(\frac{\phi_{\hat{\tau}_{\rm mean}(time_{i-1}),
\hat{\tau}_{\rm sd}(time_{i-1})}(\hat p_i)}{\phi_{0,1}(\hat p_i^T)}\right)\\
&&+\sum_{i=1}^n\log\left(\frac{\phi_{\hat\mu_i,\hat\sigma_i}(\hat p_i)}
{\phi_{0,1}(\hat p_i)}\right). \label{eq:splitll}
\end{eqnarray} 
The first sum on the right-hand side of (\ref{eq:splitll}) is the
prediction log-likelihood that we would get if we dropped the first
two stages of our hierarchical model from Section~\ref{sec:DGP} and
implemented only  step~\ref{step1} of the estimation procedure in
Section~\ref{subsec:estimation} (nonparametric conditional
distribution given previous duration).  The second sum on the right-hand
side is the additional prediction log-likelihood that comes from
including the second stage of the hierarchical model (intraday trend)
and step~\ref{step2} of the estimation procedure.  The third sum
is the incremental contribution from the first stage of the model
(parametric time series model) and step~\ref{step3} of the estimation
procedure.} 

\subsection{Additional Explanatory Variables} \label{subsec:bpi}

In a marked point process, marks are observed along with the target
events. These marks, or additional explanatory variables, may have an
impact on the distribution of log-duration. In the ACD model and its
variants, the 
marks' information is incorporated in the autoregressive formula
(\ref{eqn:acd2}). Analogously, in our model, 
it is natural to incorporate the additional variables' effects in the
parametric time-series model (step~\ref{step3}).  

A straightforward way to incorporate additional variables into an
ARFIMA model is to extend the
ARMA model with regression  \cite{book}.  We call the extension {\em ARFIMA 
with regression}.
Various methods for estimating ARFIMA models have been reviewed in
  \cite{ARFIMAsurvey}. 
In the trade duration example in Section~\ref{sec:expres}, we use ARFIMA with
regression in order to incorporate an additional explanatory variable
into our model.


\section{Experimental Results}\label{sec:expres}
In this section, the HSDM framework is applied to the trade flows from
the limit order books of four stocks (IBM, JC Penny, JP Morgan, and Exxon Mobil)
on the New York Stock Exchange (NYSE) for selected dates between
6 July 2010 and 29 July 2010, 18 consecutive trading days.  In
Section~\ref{sec:modelbuild}, the first eight days are used to build
the model and discover patterns. In Section~\ref{subsec:diagnostics},
we use the remaining days to validate the model and to make
comparisons between HSDM and benchmark models. Each day is used as
training data to estimate parameters and the following day is used as
test data for prediction and goodness-of-fit tests.  This pattern of
training data followed by test data is used during both model building
and validation.

Since the focus of this paper is on the dynamics of trade flow, the
limit order book data are preprocessed, and a list of triples 
$(time_i, T_i, BPI_i)$ is produced, where $time_i$ is the clock time,  
$T_i$ is log-duration, and $BPI_i$ (book pressure
imbalance) is our additional explanatory variable.
Table~\ref{table:triple} gives a sample of three consecutive such triples.
\begin{table}[hbt]
\centering
\caption{A sample of trade flow data.  Clock time is in milliseconds
  since midnight, and log-duration is in log-milliseconds since
  previous trade.}\label{table:triple} 
\begin{tabular}{l ccc } 
\hline
clock time & log-duration & book pressure imbalance  \\
\hline 
43026177 & 3.91 & -0.588 \\
43026179 & 6.82 & -0.134 \\
43026180 & 6.10 & -1.946  \\
\hline
\end{tabular}
\end{table}
We observe such a triple whenever a trade occurs. 
 Book pressure imbalance is defined as the 
  logarithm of the ratio of the number of sell-side shares at the ask
  price to number of buy-side shares at the bid price. In general, it
  measures the imbalance of demand between the buy and sell sides of
  the market. BPI varies from time to time and actually changes much 
more frequently than trades occur. In this paper, we keep track of
the book pressure imbalance only when a trade happens. Moreover, since
the focus in this article is the duration between trades, it is the
degree of imbalance rather than the direction of imbalance that
matters. Thus, we use the absolute value of book pressure imbalance,
$|BPI|$, as our additional explanatory variable.

In Section~\ref{sec:modelbuild}, we present empirical evidence that
motivates each of the stages in the hierarchy of the HSDM framework.
Section~\ref{sec:bench} describes the benchmark models to which we
compare HSDM in Section~\ref{subsec:diagnostics}.

\subsection{Model Building}\label{sec:modelbuild}
Here we show why we chose the particular stages in the hierarchy of
the HSDM model of Section~\ref{sec:hsdm}.  The choices are based on a
period of model building data ranging from 6 July 2010 to 16 July
2010.  It is widely accepted that 
duration processes in market microstructure data are self-exciting
(especially intertrade durations). Figure~\ref{fig:selfexciting}
\begin{figure}[hbt]
\begin{center}
\includegraphics[width = 0.8\textwidth, height =
4in]{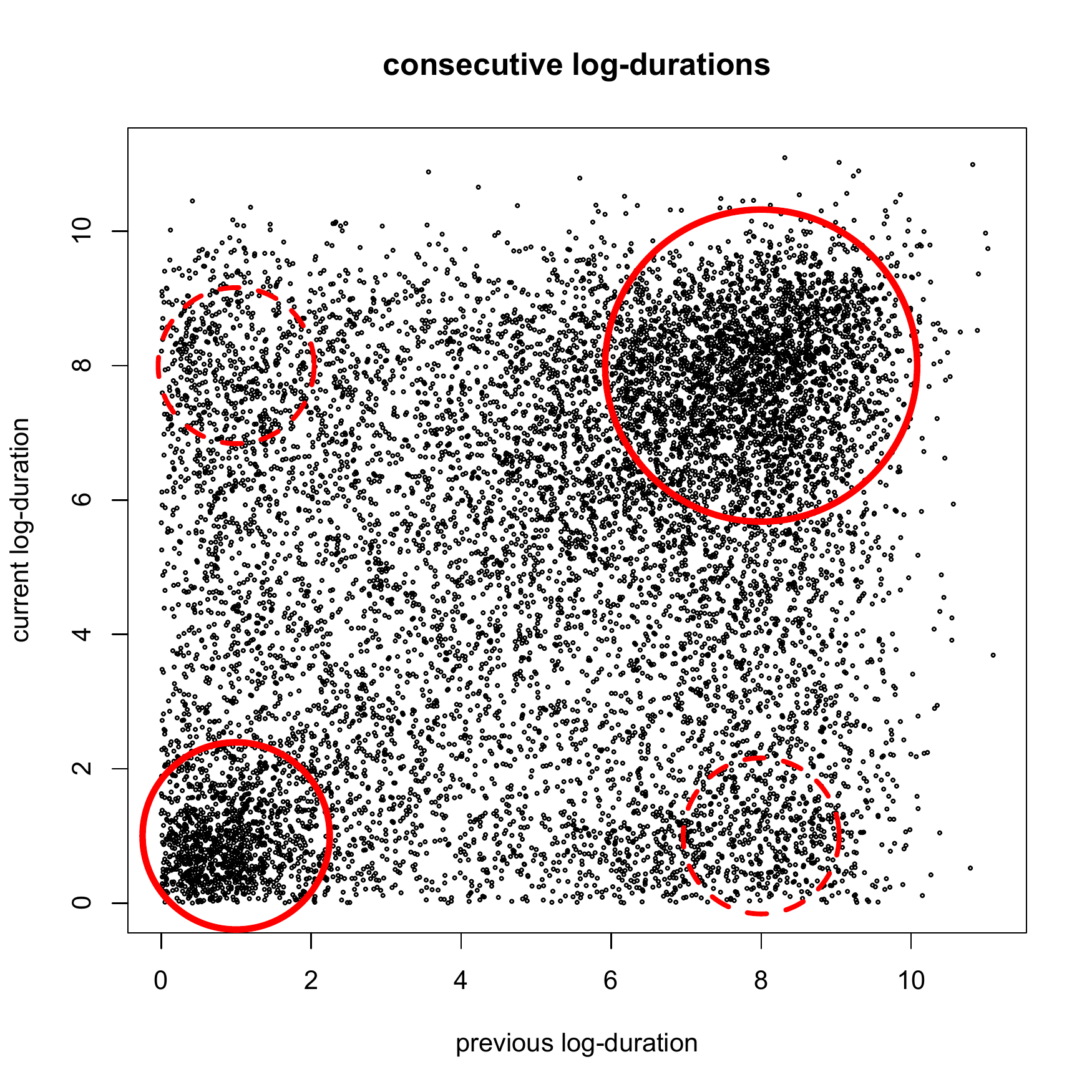} 
\caption{Current log-duration (vertical axis) vs. previous
  log-duration (horizontal axis) for one stock on one day. Two key
  features are noted by the circled regions.  The solid circles indicate
  the self-exciting nature by showing the clustering of pairs with
  both durations being short and with both durations being long. The
  dashed circles indicate the bimodal nature by showing secondary
  modes at both long and short durations regardless of the length of
  the other duration.}\label{fig:selfexciting} 
\end{center}
\end{figure}
shows a scatter plot of the pairs of consecutive
log-durations for one stock on one day.  The plot shows that (i)  long
durations tend to be 
followed by long durations, while short durations tend to be followed
by short duration (the self-exciting property) and (ii) the marginal
distribution of log-duration and the conditional distribution of
log-duration given the previous log-duration are bimodal. Since we
expect the most recent durations to carry the most information,
step~\ref{step1} of the 
estimation procedure employs nonparametric kernel
conditional density estimation as described by Hall, Racine and Li
(2004) \cite{ABC7}, conditioning on the previous log-duration
$T_{i-1}$. A technical issue arises due to the discreteness
  of durations, as they are measured to the nearest millisecond.  We explain
  this issue in more detail in Appendix~\ref{sec:noise} along with
  how we deal  with
 it.  In particular,  we explain why it makes sense
  to base the estimation on the logarithms of the durations.
The estimated conditional CDF of $T_i$ given $T_{i-1}$ is denoted by
$\hat F_i(\cdot)$. Figure~\ref{fig:npcdens}
\begin{figure}[htb]
\begin{center}
\includegraphics[width = 1.0\textwidth, height = 5in]{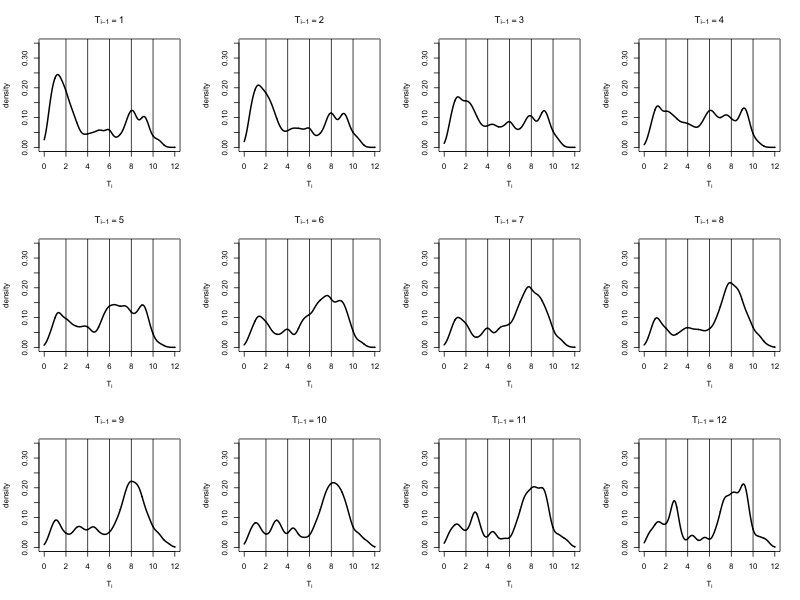}
\caption{Conditional density of $T_i$ given $T_{i-1}$ for 12 different
values of $T_{i-1}\in\{1,\ldots,12\}$.  These range from
$2.7\times10^{-3}$ seconds to 2.7 minutes.}\label{fig:npcdens}
\end{center}
\end{figure}
shows the conditional densities
 that were estimated conditional on
different values of 
the previous log-duration
 $T_{i-1}$ for one day of one stock. The
conditional densities capture the self-exciting feature of trades.
Note how the conditional density is highest near zero when the
previous duration is short, but when the previous duration
 is long, the 
density is highest at larger values. No matter what the previous
duration $T_{i-1}$ is, the 
distribution of the current duration $T_i$ has two local modes. And as
the previous 
duration increases, both the height and the location of the second
mode increase. The bimodal characteristic partly explains why
parametric conditional duration models with a unimodal residual
distribution cannot capture the dynamics of trade flow very well. 
Nonparametric conditional density estimation in step~\ref{step1}
captures not only some short memory information but also the bimodal
nature of the conditional distribution.

In step~\ref{step2} of the estimation, we introduce intraday trends
for both the mean and variance of the $\{p_i^T\}_{i\geq1}$ sequence.
Figure~\ref{fig:intradayTrend} 
\begin{figure}[htb]
\begin{center}
\includegraphics[width = 1.0\textwidth, height = 3in]{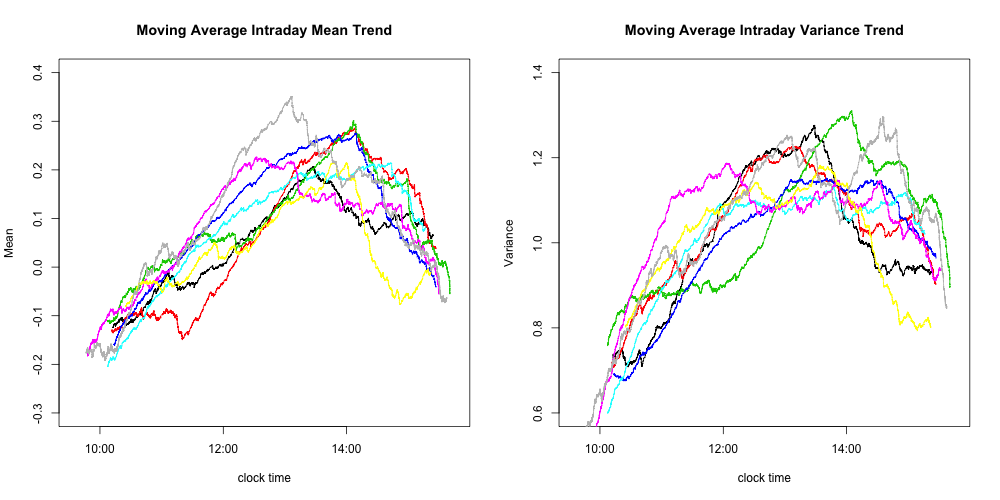}
\caption{Fitted intraday trends on eight different days (one color for each
  day) for both the mean and variance of $p_i^T$ for a single stock (JPM).
  The trends were fitted as moving average trends with window length equal 2000 trades fit to the $\hat{p}_i^T$.}
\label{fig:intradayTrend}
\end{center}
\end{figure}
shows empirical evidence for those trends for eight different days and one
stock.  The mean 
trends in Figure~\ref{fig:intradayTrend} are computed as moving
averages $m_i$ of the $\hat{p}^T_i$ sequences throughout each day.  The
trends for variance were computed as moving averages of
$(\hat{p}^T_i-m_i)^2$ throughout the day.
The shapes suggest that an intraday trend is present in the data and
that a quadratic shape might provide a good fit. The mean and the
variance of $p_i^T$ have similar intraday patterns, but they change
slightly from day to day.  This apparent change motivates our online
estimation of the parameters described in Section~\ref{subsec:prediction}.

After detrending the $\hat{p}_i^T$ in step~\ref{step2}, we compute their
autocorrelation function and partial autocorrelation function, which appear
in Figure~\ref{fig:acf} for a typical trading day of JPM. These plots
are typical of the patterns that we see across all of the four stocks
and all days. 
\begin{figure}[htb]
\begin{center}
\includegraphics[width = 1.0\textwidth, height = 2.5in]{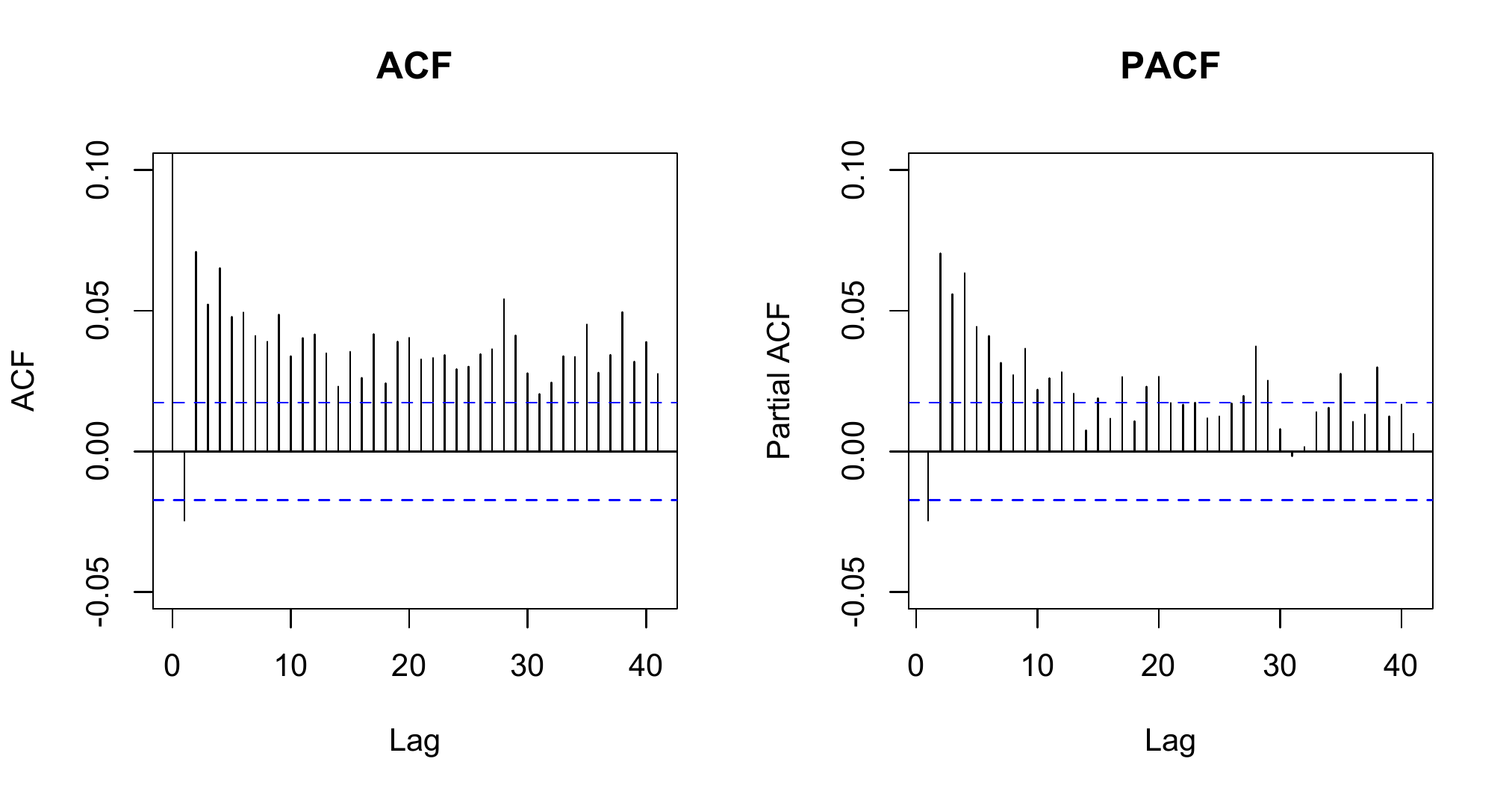}
\caption{Sample ACF and PACF of transformed generalized residuals
  after detrending.  The lag-zero autocorrelation of 1 is truncated in
the ACF plot.} 
\label{fig:acf}
\end{center}
\end{figure}
The patterns in these plots 
suggest the presence of long memory. The
negative lag-one autocorrelation suggests that the conditional density
estimation may be overfitting the lag-one dependence.

In step~\ref{step3} of the estimation, we begin with a long-memory
time series model having no exogenous variables.
Appendix~\ref{sec:arfima} provides some detail about both 
the time series model that we use (ARFIMA) and how we incorporate
regression into that model.  The particular model that we choose is
ARFIMA(0,$d$,1), which has the form
\begin{equation}
\label{eqn:ARFIMA01}
(1 - B)^d P_i  = (1 - \theta B)Z_i, \ \ where \ \ Z_i \sim N(0, \sigma^2).
\end{equation}
With our data sets, the differencing parameter $d$ is typically
estimated to be around 
0.1 (with standard error around $0.01$), suggesting significant long memory.
We chose this from the family of  ARFIMA($p,d,q$) models by 
minimizing the BIC score among the potential choices of
$p\in\{0,1,2,3\}$ and $q\in\{0,1,2,3\}$.  

\subsection{The Benchmark Models}\label{sec:bench}
In this section, we implement four common members of the ACD family of
models as benchmarks.  These include exponential ACD, semiparametric
ACD, exponential FIACD, and semiparametric FIACD.  
For each of these models, intraday patterns are estimated directly from 
the durations by fitting a cubic spline with knots at each full hour
of clock-time.  After fitting the spline, the $i$th duration is divided by
the fitted spline value at $time_{i-1}$. The resulting ratios are used
as the input data for the ACD model and its variants as Jaisk did in
\cite{FIACD}. For  
predicting test data, we use the estimated intraday pattern based on the
previous day's (training) data. 
All of the benchmark models start with the formula 
\[X_i = \Psi_i \epsilon_i,\]
where $X_i$ is a duration (divided by the intraday trend), $\Psi_i$ is
the conditional mean of $X_i$, and $\epsilon_i$ are IID from a long-tailed
distribution with mean 1. The exponential ACD and exponential FIACD,
assume that $\epsilon_i$ has the standard exponential distribution
while the
semiparametric versions allow $\epsilon_i$ to have a general density
$f(\cdot)$ that is fit by kernel density estimation. For the ACD and exponential
ACD models, the conditional mean of $X_i$ evolves as
\[\Psi_i = \omega + \alpha X_{i-1} + \beta \Psi_{i-1}.\]
For the fractionally integrated versions, the conditional mean evolves
as
\[ (1-\beta) \Psi_i = w + [1-\beta - (1-\alpha-\beta)(1-B)^d] X_{i-1}.\]
After fitting these models we compare them all to HSDM 
in terms of prediction log-likelihood and various model diagnostics as
described in Section~\ref{subsec:diagnostics}.

\subsection{Model Comparisons} \label{subsec:diagnostics}
In this section, we compare the five models based
on  prediction log-likelihood and a number of diagnostic tests using
the validation data from 19 July 2010 through 29 July 2010.
Diebold, Gunther and Tay (1998) \cite{denFORCAST} (henceforth DGT)
proposed a method of evaluating prediction models based on the 
probability integral transform.   If the predictive
distribution of the $i$th observation $X_i$ given the past has the CDF
$F_i(\cdot)$, then the sequence of values $\{F_i(X_i)\}_{i\geq1}$ forms an
IID sample of uniform random variables on the interval (0,1). 
Of course, we don't know $F_i$, but each model provides a fitted
$\hat{F}_i$ for each $i$.  We can then see to what extent the sequence
of final generalized residuals,
$\{\hat{F}_i(X_i)\}_{i\geq1}$, looks like a sample of IID uniform
random variables on the interval (0,1). 
The empirical distribution of the sequence should look like a
uniform distribution, and the sequence should not exhibit any autocorrelation.

For the HSDM model, the final generalized residuals, $c_i^*$ come from
(\ref{eq:fgr}). Each of the other models has a
corresponding final generalized residual to be tested, i.e. $c_i^* =
\hat F(X_i/\hat{\Psi}_i)$, where $\hat F(\cdot)$ is the estimated CDF of
$\epsilon$ for each model. In the exponential ACD and exponential
FIACD models, $\hat F(\cdot)$
is the CDF of the exponential distribution with mean 1, while in the
semiparametric ACD and semiparametric FIACD,  $\hat F(\cdot)$ is an
estimate based on kernel density estimation using the fitted
$\epsilon_i$ values as suggested by \cite{FIACD}.  In
Appendix~\ref{app:tail} we give more detail on the kernel density
estimation. In particular, we explain why it makes sense to base the
estimation on the $\log(\epsilon_i)$ values.

Bauwens, Giot, Grammig and Veredas (2000)
\cite{COMPARE} compared some of the most popular conditional duration models,
including ACD, log ACD, threshold ACD, SCD and SVD,  by means of the DGT
method. Although they found that these models generally work well on the
price duration process and the volume duration process, none of them work
on the trade duration process.  

There is one important distinction between the 
diagnostics that we compute and those computed in most
other papers on self-exciting point process models.  As in other
papers, we first fit models to training data.  The difference is that
we compute the final generalized residuals by using the fitted
models to predict test data. Most papers compute their diagnostics
from the final generalized residuals obtained by predicting the same
training data that were used to fit the 
models.  There are two main reasons for using test data to perform the
diagnostics rather than using training data.  First, it is well-known
that virtually all 
statistical models fit better to the data from which they were
estimated than to new data that were not used for their estimation.
It is good statistical practice to evaluate the fit of every model on
test data, if such data are available.  Second, we are comparing a
number of models that are semiparametric along with some that do not
form a nested 
sequence.  Traditional likelihood-ratio tests are useful for comparing
nested parametric models in order to see whether the additional
parameters provide significant improvement or merely overfit.  With
semiparametric models, the theory of likelihood-ratio tests is still
being developed.  In order to minimize the chance of overfitting with
semiparametric models, it is good practice to evaluate them with test
data that were not used in the fitting.  (This procedure is good
practice even with parametric models.)  If a model overfits the
training data, it will make noisy predictions with test data.  In
comparing two or more models, comparing their predictions based on
test data is the safest way to avoid choosing a model that was overfit.

\subsubsection{Prediction Log-Likelihood}
In this section, we compare the five models based on prediction
log-likelihood for test data.  This is essentially a comparison based
on how high is each model's predictive density at the observed test data.
The larger the prediction log-likelihood, the better the prediction
is. Figure~\ref{fig:ACDvsHSDM}
\begin{figure}[htb]
\begin{center}
\includegraphics[width = 1.0\textwidth, height = 3.5in]{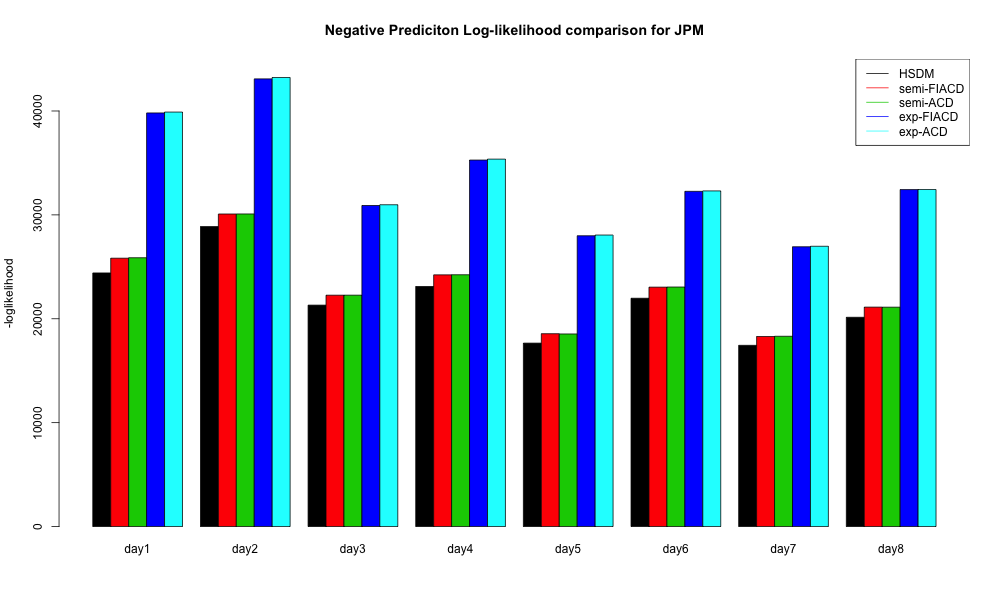}
\caption{Comparison between HSDM model and benchmark models in terms
 of negative prediction log-likelihood. Smaller is better.} 
\label{fig:ACDvsHSDM}
\end{center}
\end{figure}
shows the negative prediction
log-likelihood for JPM on the eight consecutive test days in the
validation data for all five models. The HSDM model (black bars) is
consistently better than all of the benchmark models. And within the
four benchmark models, semi-parametric models are better than
exponential models. ACD models and FIACD models exhibit similar
performance. The other three stocks (IBM, XOM, JCP) have similar patterns. 
 the HSDM and benchmark model perform on each individual
observation is of great interest.  
Figure~\ref{fig:HistIndividualLL} 
\begin{figure}[htb]
\begin{center}
\includegraphics[width = 0.6\textwidth, height = 2.8in]{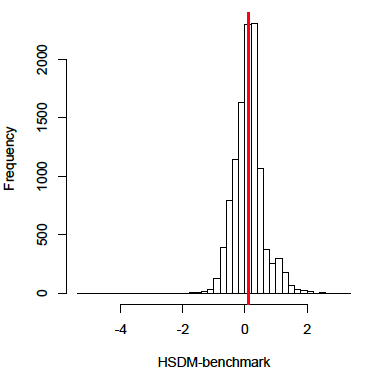}
\caption{Histogram of differences between individual log-likelihoods,
  HSDM minus the best benchmark model (semiparametric-FIACD) on a
  typical day of JPM. The red vertical line is the mean ($\approx
  0.122$). The median is 0.124, and the percentage of differences that
  are positive is 62.4\%.}  
\label{fig:HistIndividualLL}
\end{center}
\end{figure}
compares the performance of HSDM and the best benchmark model,
semiparametric FIACD, on a typical day of 
JPM. The plot 
displays the histogram of differences of individual prediction
log-likelihood between HSDM and semiparametric FIACD. The positive
part of the histogram
corresponds to those observations on which HSDM has a better
prediction than semiparametric FIACD, while the negative part corresponds to
observations on which HSDM performs worse than semiparametric FIACD. The mean of
the difference is around 0.122 (denoted by the red vertical
line), the median is around 0.124, and the percentage of positive
differences is 62.4\%. 
All of the other days and stocks have
similar patterns.  

\subsubsection{Uniform Test}\label{sec:utest}
In this section, we compare each set of final generalized residuals to
the uniform distribution on the interval (0,1) by means of the
Kolmogorov-Smirnov (KS) test. 
The left subfigure in Figure~\ref{fig:qqplot}
\begin{figure}[hbt]
\begin{center}
\includegraphics[width = 1.0\textwidth, height =
2.8in]{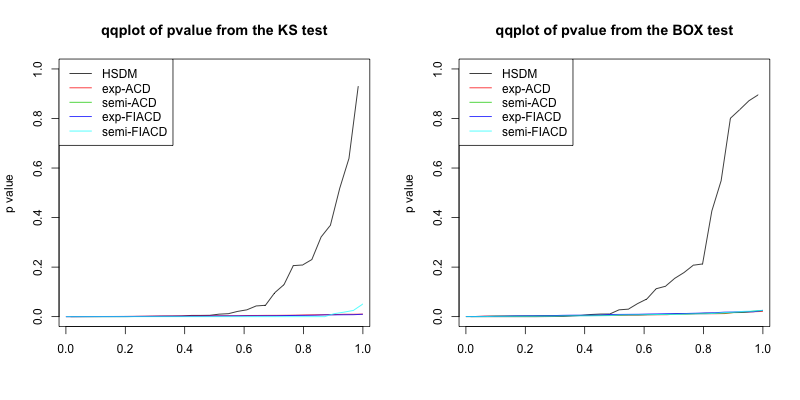} 
\caption{Q-Q plots of $p$-values from KS test and Ljung-Box test on HSDM and
  benchmark models.}\label{fig:qqplot}
\end{center}
\end{figure}
shows the Q-Q plot of 32
$p$-values from KS tests for each of the five models (8 test days for
each of 4 stocks). If the final
generalized residuals were really sampled from 
their predictive distributions,
then the $p$-values should be uniformly
distributed on the interval $(0,1)$ and the the Q-Q plot should be
around the 45-degree line. If the final generalized residuals come
from different distributions, the $p$-values should be stochastically
smaller than uniform on $(0,1)$.

Since each model estimates a predicitive distribution from the observed
data (including training data), we cannot expect the $p$-values to be
uniformly distributed.  When the estimated distributions come from a
finite-dimensional parametric family, there are modifications
available to the KS test so that the $p$-values have uniform
distribution asymptotically.  When the predictive distributions are estimated
nonparametrically or semiparameterically, the appropriate
modifications have not yet been determined.  Nevertheless, the KS test
statistics (or equivalently their $p$-values) still give a means for
comparing models based on how close to uniform the final generalized
residuals appear to be.  From the comparison in
Figure~\ref{fig:qqplot}, it is clear the benchmark models have
(empirically) stochastically smaller $p$-values than HSDM.
The HSDM $p$-values are still stochastically smaller than the
uniform distribution, but they are much larger than those for the
benchmark models.

The reason that the $p$-values from the exponential models are all so small
is that the data come from a distribution with a much fatter tail than
that of the 
exponential distribution. As a matter of fact, no popular parametric model
can perform satisfactorily because the empirical log-duration has a
bimodal shape.  The semiparametric models perform relatively better
but still have very small $p$-values. 

\subsubsection{Autocorrelation Test}\label{sec:actest}
In this section, we assess the degree of autocorrelation in the final
generalized residuals. We used the Ljung-Box test with lags of 5, 10,
and 15. The test is conducted on each of the 32 pairs of stock/test
day for each model.  
The right subfigure in Figure~\ref{fig:qqplot} shows Q-Q plots of 32
$p$-values for the Ljung-Box test with lag 10 for all five models. The results
of lags 5 and 15 are similar. If there were no autocorrelations, the
$p$-vaules should be uniformly distributed on the interval $(0,1)$.
If there are autocorrelations, the $p$-values should be stochastically smaller.
The four benchmark models have $p$-values that are stochastically much
smaller than those of HSDM.
of the Box test. Although the qqplot of HSDM shows that the p values
are not from a standard uniform distribution, it is significantly
better than all of the four benchmark models, which almost have all 0
$p$-values.  
The reason that the $p-$values are so low for the benchmark models is
that they don't capture the 
information of the most recent duration very well. The HSDM model captures
this information non parametrically, which helps eliminate lag-one
autocorrelation from the final generalized residuals. In summary, the HSDM
model captures the time dependency significantly better than the
benchmark models.

\subsection{Additional Explanatory Variables} \label{subsec:AEV}
As mentioned earlier, the effects of
exogenous variables can be incorporated in step~\ref{step3} of the HSDM
framework by extending the parametric time series model with
regression. As an example, we took into account book pressure
imbalance (BPI) (defined in 
  Section~\ref{sec:expres}) as additional explanatory variables. Other
  variables, such as spread and volume, may be considered
  similarly. Figure~\ref{fig:BPIeffect} 
\begin{figure}[htb]
\begin{center}
\includegraphics[width = 3.5in, height = 3.0in]{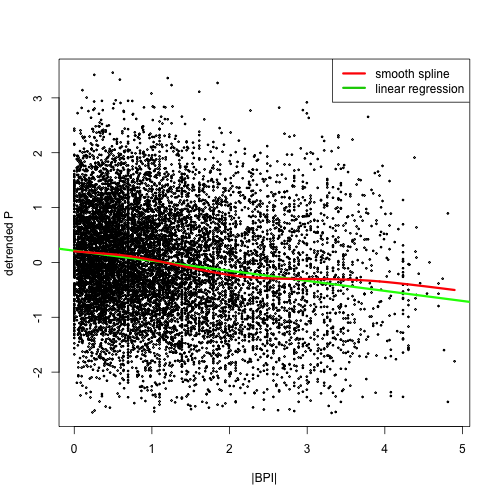}
\caption{Effects of BPI on detrended P. Detrended P vs absolute value of BPI on a typical day of JPM. A significant downward trend is noticeable. }
\label{fig:BPIeffect}
\end{center}
\end{figure}
plots $\hat{p}_i$ (detrended
  $\hat{p}^T_i$) against $|{\rm BPI}_i|$ on a typical day of JPM. The red curve
  is a fitted smoothing spline, while the green line is the fitted
  linear regression.  
Apparently, as the absolute value of BPI increases, which means that
there is more imbalance between buy side and sell side, $\hat{p}_i$
becomes smaller, leading to a shorter duration as expected. And the
effect is close to a linear relationship.
We choose how many lags of BPI to include in the
regression by BIC. It turns out that the number of lags varies by
stock. For example, the inclusion of two lags of BPI improves the 
prediction on JPM and XOM significantly, while there is
no noticeable improvement with IBM and JCP. There is no reason that
the same exogenous variables should be included in the models for all
stocks, hence each stock will be modeled either with or without
inclusion of BPI as we determine during the model building stage.
Equation (\ref{eqn:arfimaR}) is the ARFIMA with regression model for
those stocks that use two lags of BPI.
\begin{equation}
\label{eqn:arfimaR}
(1-B)^d ( P_i - b_0 - b_1 |BPI_{i-1}| - b_2 |BPI_{i-2}| ) = (1 -
\theta B) Z_i,  \ \ Z_i \sim N(0, \sigma^2) 
\end{equation}
Figure~\ref{fig:llimprovementBPI} 
\begin{figure}[htb]
\begin{center}
\includegraphics[width = 4.5in, height = 2.5in]{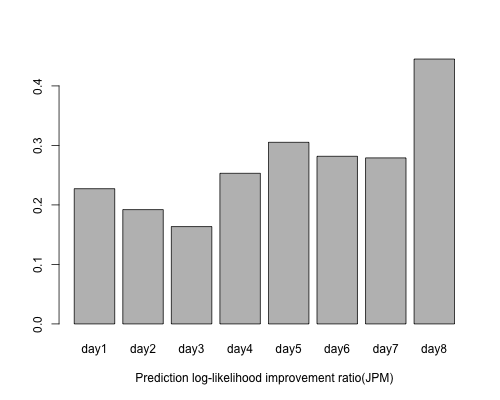}
\caption{Increase in prediction log-likelihood by incorporation
  of $|BPI|$ for JPM as a fraction of the increase of HSDM (without
  BPI) over semiparametric-FIACD, i.e., equation (\ref{eq:pllratio}).}
\label{fig:llimprovementBPI}
\end{center}
\end{figure}
shows the increase in prediction
log-likelihood from incorporating BPI as in \ref{eqn:arfimaR} as a
fraction of the amount by which HSDM (without $|BPI|$) improves over
the best benchmark 
model (semiparametric-FIACD) for each of 8 days for JPM.  That is, the
plot shows 
\begin{equation}\label{eq:pllratio}
\frac{ll_{\rm HSDM\ with\ BPI} - ll_{\rm HSDM\ without\ BPI}}{ll_{\rm HSDM\ without\
    BPI} - ll_{\rm semiparametric-FIACD}}.
\end{equation} 
The consistent large positive ratios across 8
validation days' data indicate that the incorporation of BPI as in
equation \ref{eqn:arfimaR} further improves the model significantly on
JPM.  

The model constructed in this section is merely an illustration of how
one might incorporate an exogenous variable into HSDM,
hence, we did not include BPI in the benchmark models for comparison.


\section{Discussion and Conclusion}\label{sec:disc}

We proposed a semiparametric framework for estimating the joint
distribution of a marked point process. In particular, we applied
our framework to trade duration processes. Using validation data that were
not used to fit the models, the DGT evaluation methods (Diebold,
Gunther and Tay, 1998 \cite{denFORCAST}), 
show that our model does consistently better than a number of benchmark models
that are variants of the widely-used ACD model.  The evaluation methods include
Kolmogorov-Smirnov tests for uniformity of final generalized residuals
and Ljung-Box tests for lack of autocorrelation.
Bauwens, Giot, Grammig and Veredas (2000)
\cite{COMPARE} claimed that the parametric ACD model and all of its
parametric variants 
fail to pass both the Kolmogorov-Smirnov test and the Ljung-Box
test. This paper also shows that even semiparametric ACD models and
their variants  perform poorly in the sense of DGT evaluation
methods, while HSDM  shows a consistent
improvement over the benchmark models. In addition, the HSDM model has a
consistently better performance than the ACD model and its variants in
the sense of 
prediction log-likelihood on validation data. Therefore, our framework has
great potential 
for modeling the distributions of duration processes, especially
the trade duration process.

The framework has two important features. First, it recognizes that
both the shapes of distributions and the 
time dependency must be modeled. Nonparametric conditional density estimation
captures the shapes of distributions along with the most recent time
dependence. Parametric time series models capture the longer-term time
dependency.  Nonparametric estimation of the most recent time
dependency gives the model greater flexibility.  Second, our
estimation procedure adaptively fits the
intraday trend so as to capture changes that occur from day to day. 

The two features described above help to explain why the HSDM
procedure outperforms the existing ACD family on trade duration
processes.  Some of these features could be incorporated into ACD
models and their fitting.  Such incorporation will be the focus of
future work. However, every model that is based on equation
(\ref{eqn:acd1}) will continue to suffer from some of the limitations
mentioned in Section~\ref{sec:limit}.

\appendix

\begin{center}
{\bf Appendix}
\end{center}

\section{ARFIMA and ARFIMA With Regression
  Models}\label{sec:arfima} 
Equation~(\ref{eqn:arfimaApx}) shows the formula for the general
ARFIMA($p,d,q$) model for a response variable $Y$. 

\begin{equation}
\label{eqn:arfimaApx}
\left(1-\sum_{j=1}^p\phi_jB^j\right)(1-B)^d Y_i = 
\left(1 + \sum_{j=1}^q\theta_i B^j\right) Z_i, \ \ Z_i \sim N(0, \sigma^2),
\end{equation}
where $(1-B)^d$ is defined by the generalized binomial series
expansion as follows: 
\begin{eqnarray*}
(1-B)^d &=& \sum_{k=0}^{\infty}  \left( \begin{array}{c} d \\ k \end{array} \right)   (-B)^k \\
	     &=& \sum_{k=0} \frac{\Pi_{a=0}^{k-1}(d-a)(-B)^k}{k!} \\
	     &=& 1 - dB + \frac{d(d-1)}{2!} B^2 - ...,
\end{eqnarray*}
where $B$ is the backshift operator.
The {\em R} package {\tt fracdiff} \cite{fracdiff} can be
used to fit ARFIMA models. 

Suppose that we have $k$ auxiliary variables $U_1,\ldots,U_k$ that we
contemplate using to help predict $Y$.  
The ARFIMA with regression model replaces $Y_i$ in (\ref{eqn:arfimaApx})
with 
\[Y_i- \alpha_0 - \sum_{j=1}^k \alpha_j u_{i,j},\]
where $u_{i,j}$ is the value of $U_j$ that corresponds to $Y_i$, and
$\alpha_0,\ldots,\alpha_k$ are additional parameters to be estimated.

\section{Issues Related to Nonparametric Density
  Estimation}\label{sec:noise} 
Throughout this section, $X_i$ and $T_i$ denote the original
duration and log-duration respectively. The duration variable $X$ is
measured in milliseconds and is hence discrete. It ranges from 1
millisecond up to more than 50000 milliseconds and exhibits an extremely
long tail. We explain how we deal with the long tail in
Section~\ref{app:tail}, and we explain how we deal with the
discreteness in Section~\ref{app:discrete}.
\subsection{The Tail of the Distribution}\label{app:tail}
Figure~\ref{fig:durationHist}
\begin{figure}[htb]
\begin{center}
\includegraphics[width = 1.1\textwidth, height = 2.5in]{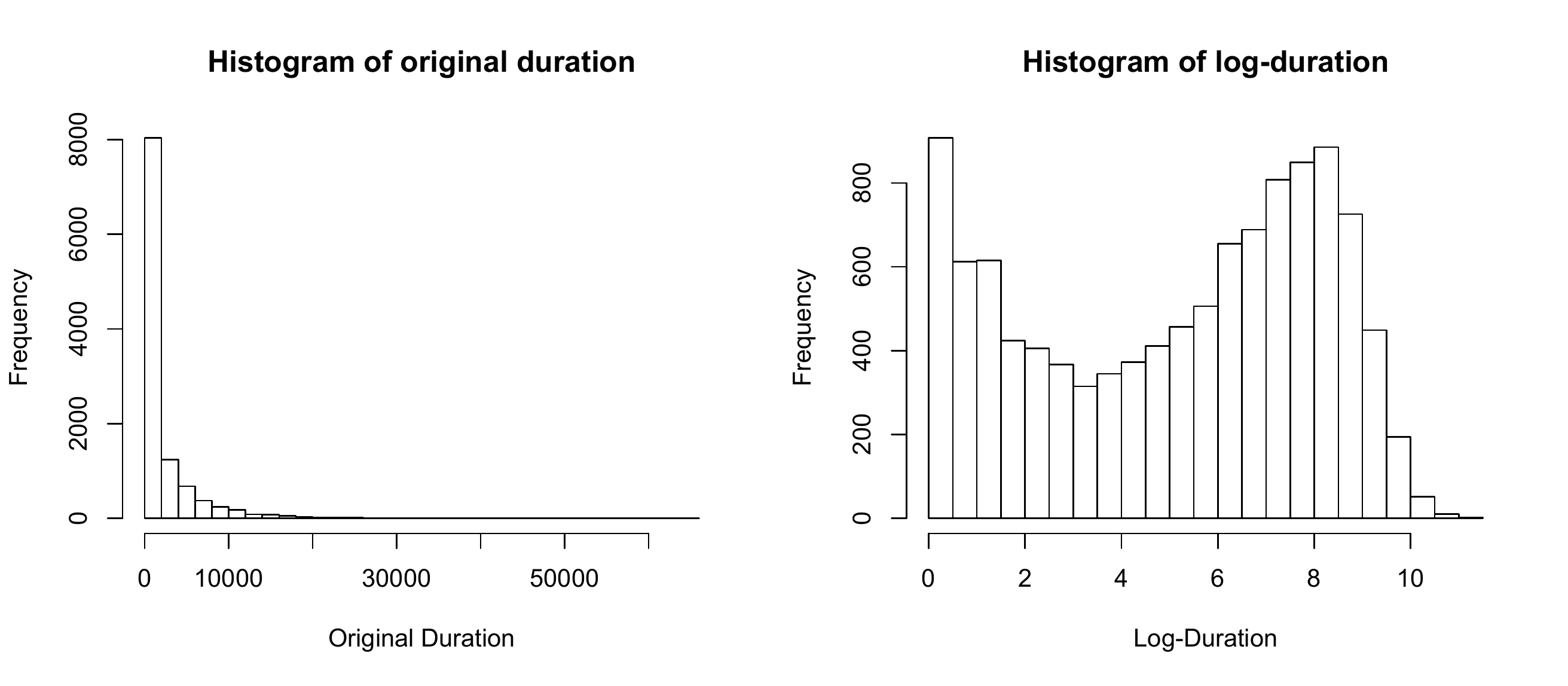}
\caption{Left panel: histogram of durations for one stock on one
  day. Right panel: histogram of log-durations on the same
  day.}\label{fig:durationHist} 
\end{center}
\end{figure}
shows histograms of durations
for one stock on one day both in the original scale and in the log scale.
The discreteness is merely an artifact of the measurement
process, and we would want to model duration as a continuous variable.  
The logarithm transformation is employed before modeling, due to the fact that
the long tail makes kernel conditional density
estimation with a single bandwidth unreliable.  For example, a
bandwidth small enough to avoid merging everything in the first bar of
the left panel in Figure~\ref{fig:durationHist} will be far too small for the
upper tail of the distribution.  Density estimation on
log-durations is equivalent to using a variable bandwidth on the
original scale.  For the reasons given above, we do all kernel density
estimation on the log scale for HSDM. Likewise in the semiparametric
variations of ACD, the long tail of the residuals $\epsilon$ also calls for
kernel density estimation (KDE) on the log scale. The two subfigures in 
Figure~\ref{fig:residualHist} 
\begin{figure}[htb]
\begin{center}
\begin{subfigure}{0.45\textwidth}
\caption{}
\includegraphics[width = \textwidth, height = 2.5in ]{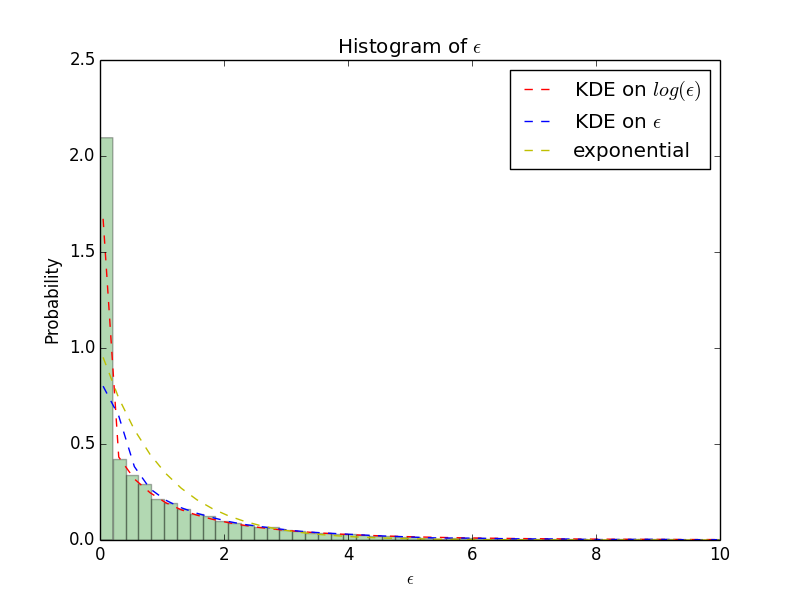}
\label{fig:epsHist}
\end{subfigure}
~
\begin{subfigure}{0.45\textwidth}
\caption{}
\includegraphics[width = \textwidth, height = 2.5in]{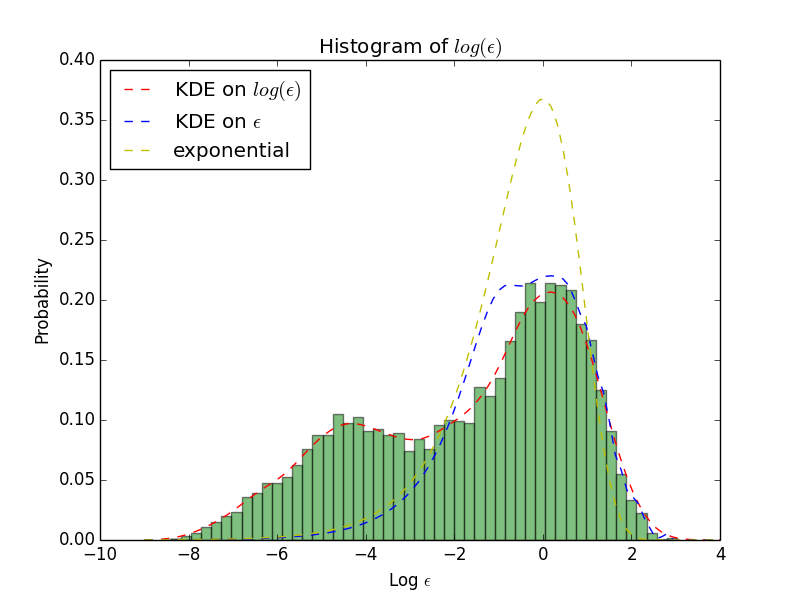}
\label{fig:logepsHist}
\end{subfigure}
\end{center}
\footnotesize
\caption{(a) Histogram of semiparametric ACD residuals. (b) Histogram
  of semiparametric ACD residuals in log scale. Each plot has
  superimposed the density of the standard exponential distribution
  along with two densities estimated by KDE: one based on the original
residuals, and the other based on log-residuals.}
\label{fig:residualHist}
\end{figure}
show the histograms of
FIACD residuals $\epsilon$ in the original scale and the log scale.
Superimposed on each histogram are
three fitted densities in the corresponding scales.  The three
densities are those of the standard exponential distribution,
a density estimated by KDE in the original scale, and one estimated by
KDE in the log scale. It is obvious that KDE
in the log scale captures the distribution of residuals more
accurately. Therefore,  kernel density estimation is applied to
$log(\epsilon)$ in semiparametric variants of ACD.

\subsection{Discreteness}\label{app:discrete}
Because the duration data are equally spaced (one-millisecond gaps), the
logarithms have larger gaps at low values and smaller gaps at large
values.  Since the durations are denser at low values, automatic
bandwidth selectors will tend to choose small bandwidths that put the
most common log-durations into separate bins while undersmoothing the
upper tail.  This is counterproductive, and
there is a simple method for obtaining more reasonable bandwidths.
We can replace each discretely-recorded duration $X_i$ by $X_i-U_i$,
where $U_i$ is an independently-generated noise value supported on an
interval of length one millisecond.  We choose $U_i$ to be uniform on
the interval $(0,1)$.  Let $Y_i=\log(1+X_i-U_i)$.  The reason for
adding the 1 before taking the logarithm is that $X_i=1$ corresponds to
$\log(X_i-U_i)<0$. If we used these data, kernel density estimation
would waste much of its effort estimating a density on $(-\infty,0)$,
while the data have no information to distinguish these values.
 So, we use the sequence
$\{Y_i\}_{i\geq1}$ to construct a conditional kernel density estimate
$g(\cdot | Y_{i-1})$, and then we convert this to an estimated
conditional density $\hat f_i$ for $T_i=\log[\exp(Y_i)-1]$ given
$T_{i-1}=t_{i-1}$ by 
\[\hat f_i(t)=g(\log[1+\exp(t)]|\log[1+\exp(t_{i-1})])
\frac{\exp(t)}{1+\exp(t)},\mbox{\ for $t>0$,}\]
for use in step~\ref{step1}  of the HSDM estimation.

Next, we give some justification for subtracting uniform random
variables before taking logarithms.  Subtracting a uniform
random variable from an integer-valued discrete random variable is
motivated by the discrete version of the probability integral
transform (PIT).  The well-known continuous version of the PIT is the following.

Let $\{Z_i\}_{i\geq1}$ be a sequence of random variables such that
$Z_1$ has CDF $F_1$, and the conditional CDF of $Z_i$ given
$Z_1=z_1,\ldots,Z_{i-1}=z_{i-1}$ is $F_i(\cdot|z_1,\ldots,z_{i-1})$
for $i>1$.  Assume that $F_i$ is a continuous CDF for all $i$. Define
$W_1=F_1(Z_1)$ and $W_i=F_i(Z_i|Z_1,\ldots,Z_{i-1})$ for $i>1$.  Then
$\{W_i\}_{i\geq1}$ are IID random variables with the uniform
distribution on the interval $(0,1)$.

The less well-known general version of the PIT is the following, of
which Proposition~\ref{pro:pit} is a corollary.
\begin{lemma}[General PIT]\label{lem:gpit}
Let $\{X_i\}_{i\geq1}$ be a sequence of random variables such that
$X_1$ has CDF $F_1$, and the conditional CDF of $X_i$ given
$X_1=x_1,\ldots,X_{i-1}=x_{i-1}$ is $F_i(\cdot|x_1,\ldots,x_{i-1})$
for $i>1$.  Define $F'_1(x)=\lim_{y\uparrow x}F_1(y)$, and for each
$i>1$ and each vector $(x_1,\ldots,x_{i-1},x)$ define
$F'_i(x|x_1,\ldots,x_{i-1})=\lim_{y\uparrow x}F_i(y|x_1,\ldots,x_{i-1})$.
Also, define $J_1(x)=F_1(x)-F'_1(x)$ and for $i>1$ define
$J_i(x|x_1,\ldots,x_{i-1})=F_i(x|x_1,\ldots,x_{i-1})-F'_i(x|x_1,\ldots,x_{i-1})$.
($J_i$ measures the sizes of any jumps in the CDF $F_i$.)
Let $\{V_i\}_{i\geq1}$ be a sequence of IID uniform random variables
on the interval $(0,1)$ that are independent of
$\{X_i\}_{i\geq1}$. Define $W_1=F_1(X_1)-(1-V_1)J_1(X_1)$ and for 
$i>1$ define 
\begin{equation}\label{eq:pit1}
W_i=F_i(X_i|X_1,\ldots,X_{i-1})-(1-V_i)J_i(X_i|X_1,\ldots,X_{i-1}).
\end{equation} 
Then $\{W_i\}_{i\geq1}$ are IID random variables with the uniform
distribution on the interval $(0,1)$.
\end{lemma}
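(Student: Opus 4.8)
The plan is to reduce the lemma to a single ``randomized probability integral transform'' and then propagate it along the sequence by conditioning on the past. First I would isolate the following one-variable statement: if a random variable $X$ has CDF $F$, if $F(x^-):=\lim_{y\uparrow x}F(y)$ and $J(x):=F(x)-F(x^-)$, and if $V\sim U(0,1)$ is independent of $X$, then $W:=F(X)-(1-V)J(X)=F(X^-)+VJ(X)$ is uniform on $(0,1)$. To prove this I would fix $u\in(0,1)$ and evaluate $\Pr(W\le u)$ using two elementary observations: $W$ always lies in the interval $[F(X^-),F(X)]$, and there is at most one value $x^{*}$ --- necessarily an atom of $X$ --- with $F(x^{*-})\le u<F(x^{*})$. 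Then $\{W\le u\}$ is the disjoint union of $\{F(X)\le u\}$ and $\{F(X^-)\le u<F(X)\}\cap\{V\le(u-F(X^-))/J(X)\}$. When the straddling atom $x^{*}$ exists I would show $\{F(X)\le u\}=\{X<x^{*}\}$, contributing $F(x^{*-})$, while on the second event $X$ must equal $x^{*}$ and, since $V$ is still uniform given $X=x^{*}$, the contribution is $\Pr(X=x^{*})(u-F(x^{*-}))/J(x^{*})=u-F(x^{*-})$, so the two add to $u$. When no such atom exists, $\{F(X)\le u\}=\{X\le s\}$ with $s=\sup\{x:F(x)\le u\}$ satisfying $F(s)=u$, the second event is empty, and again $\Pr(W\le u)=u$.

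Next I would lift this to the full sequence by induction on $n$, showing $\Pr(W_1\le u_1,\dots,W_n\le u_n)=\prod_{j=1}^{n}u_j$ for every $u_1,\dots,u_n\in(0,1)$. Condition on $\mathcal G_{n-1}:=\sigma(X_1,\dots,X_{n-1},V_1,\dots,V_{n-1})$. Since $\{V_i\}$ is IID uniform and independent of $\{X_i\}$, the conditional law of $X_n$ given $\mathcal G_{n-1}$ is a regular conditional distribution with CDF $F_n(\cdot\mid X_1,\dots,X_{n-1})$ (such a version exists because the $X_i$ are real-valued), while $V_n$ is independent of $(X_1,\dots,X_n,V_1,\dots,V_{n-1})$ and uniform. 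Applying the one-variable statement conditionally on $\mathcal G_{n-1}$ then gives $\Pr(W_n\le u_n\mid\mathcal G_{n-1})=u_n$ almost surely --- a constant not depending on the values being conditioned on. Because $W_1,\dots,W_{n-1}$ are $\mathcal G_{n-1}$-measurable, taking expectations yields $\Pr(W_1\le u_1,\dots,W_n\le u_n)=u_n\,\Pr(W_1\le u_1,\dots,W_{n-1}\le u_{n-1})$, and the induction hypothesis finishes the step. Letting $n$ be arbitrary shows $\{W_i\}_{i\ge1}$ is a sequence of independent $U(0,1)$ variables.

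I expect the real work, and the main obstacle to a clean write-up, to be the one-variable statement rather than the conditioning step: one must track strict versus non-strict inequalities carefully, justify the existence and uniqueness of the straddling atom $x^{*}$, and dispose of the degenerate cases in which $u$ falls outside the closure of the range of $F$ or $F$ is flat at level $u$. The conditioning step is conceptually the crux --- it is precisely the fact that the conditional law of $W_n$ is \emph{free of} the conditioning variables that turns marginal uniformity into independence from the past --- but it is routine once the one-variable lemma is in hand. Finally, the classical continuous PIT recalled above is the special case $J_i\equiv0$, in which $W_i=F_i(X_i\mid X_1,\dots,X_{i-1})$ and the auxiliary randomization is vacuous, so Proposition~\ref{pro:pit} follows as a corollary.
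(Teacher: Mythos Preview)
Your proposal is correct and follows essentially the same route as the paper. Both arguments split the event $\{W_i\le u\}$ according to whether $X_i$ lies strictly below, exactly at, or above the generalized $u$-quantile of $F_i$; your ``straddling atom'' $x^{*}$ is the paper's $Q_i(p\mid X_1,\dots,X_{i-1})$, and your decomposition into $\{F(X)\le u\}$ and $\{F(X^-)\le u<F(X),\ V\le(u-F(X^-))/J(X)\}$ matches the paper's two-case description of $\{W_i\le p\}$. The only organizational difference is that you isolate the one-variable randomized PIT first and then lift it by induction, whereas the paper works conditionally from the outset; you are also more explicit about the boundary cases (no straddling atom, flat stretches of $F$), which the paper glosses over.
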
 
\begin{proof}
For each $i>1$ and each $0<p<1$, define
\[Q_i(p|x_1,\ldots,x_{i-1})=\inf\{x:F_i(x|x_1,\ldots,x_{i-1})\geq p\},\]
the generalization of the quantile function to general distributions,
which is continuous from the left on $(0,1)$. 
For $0<p<1$, $W_i\leq p$ if and only if
\begin{eqnarray*}
\mbox{either\ }X_i<Q_i(p|X_1,\ldots,X_{i-1})&\mbox{\ or\
}&\left(X_i=Q_i(p|X_1,\ldots,X_{i-1})\phantom{\frac{p}{J}}\right.\\
&&\left.\mbox{\ and\ }V_i\leq
\frac{p-F'(Q_i(p|X_1,\ldots,X_{i-1}))}{J_i(X_i|X_1,\ldots,X_{i-1})}\right).
\end{eqnarray*} 
Hence,
\begin{eqnarray*} 
\lefteqn{\Pr(W_i\leq p|W_1,\ldots,W_{i-1},X_1,\ldots,X_{i-1})}\\
&=&F'(Q_i(p|X_1,\ldots,X_{i-1}))\\
&&+J_i(p|X_1,\ldots,X_{i-1})
\frac{p-F'_i(Q_i(p|X_1,\ldots,X_{i-1}))}{J_i(X_i|X_1,\ldots,X_{i-1})}\\
&=&p,
\end{eqnarray*} 
and $W_i$ has the uniform distribution on $(0,1)$ conditional on
$W_1,\ldots,W_{i-1},X_1,\ldots,X_{i-1}$.  It follows that $W_i$ is
independent of  $W_1,\ldots,W_{i-1},X_1,\ldots,X_{i-1}$ and hence is
independent of $W_1,\ldots,W_{i-1}$.  The proof that $W_1$ has the
uniform distribution is essentially the same as above without the conditioning.
\end{proof} 

We now combine the two versions of the PIT to justify subtracting
independent uniform random variables from integer-valued random
variables.
\begin{lemma}\label{lem:both}
Assume the same conditions and notation as in  Lemma~\ref{lem:gpit},
and assume further that each $X_i$ is integer valued.
Let $Z_i=X_i-V_i$ for each $i$.  Let $G_1$ be the CDF of $Z_1$, and
let $G_i$ be the conditional CDF of $Z_i$ given $Z_1,\ldots,Z_{i-1}$
for $i>1$.  Each $G_i$ is a continuous CDF.  Also, $W_1=G_1(Z_1)$ and
$W_i=G_i(Z_i|Z_1,\ldots,Z_{i-1})$ 
for $i>1$, where $W_1$ and $W_i$ are defined in Lemma~\ref{lem:gpit}.
\end{lemma}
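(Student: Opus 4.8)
The plan is to split the statement into two assertions — continuity of each $G_i$, and the identity $W_i = G_i(Z_i \mid Z_1,\dots,Z_{i-1})$ — and to handle both by first pinning down the conditional law of $Z_i$ given the past.

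First I would observe that, since $V_j \in (0,1)$ almost surely, $Z_j = X_j - V_j$ lies a.s.\ in the open interval $(X_j - 1, X_j)$, which contains no integer; hence $X_j = \lceil Z_j \rceil$ and $V_j = \lceil Z_j \rceil - Z_j$ are a.s.\ deterministic functions of $Z_j$, and conversely $Z_j$ is a function of $(X_j, V_j)$. It follows that $\sigma(Z_1,\dots,Z_{i-1}) = \sigma(X_1,\dots,X_{i-1},V_1,\dots,V_{i-1})$, so conditioning on $Z_1,\dots,Z_{i-1}$ is the same as conditioning on $X_1 = x_1,\dots,X_{i-1} = x_{i-1}$ together with $V_1,\dots,V_{i-1}$. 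Since $V_i$ is independent of these and uniform on $(0,1)$, and $X_i$ given $X_1 = x_1,\dots,X_{i-1}=x_{i-1}$ places mass $J_i(n \mid x_1,\dots,x_{i-1})$ at each integer $n$ of its support, the variable $Z_i = X_i - V_i$, given this past, is the mixture that with probability $J_i(n \mid x_1,\dots,x_{i-1})$ is uniform on $(n-1,n)$. For $z \in (n-1,n]$ this yields
\[
G_i(z \mid z_1,\dots,z_{i-1}) = F_i(n-1 \mid x_1,\dots,x_{i-1}) + J_i(n \mid x_1,\dots,x_{i-1})\,(z - n + 1).
\]
I would then note that $G_i$ is linear on each interval $(n-1,n]$ and, because $X_i$ is integer valued, $F_i(n-1\mid\cdot) = F'_i(n\mid\cdot)$ and $J_i(n\mid\cdot) = F_i(n\mid\cdot) - F'_i(n\mid\cdot)$, so the pieces match at the integers and $G_i$ is continuous; the $i = 1$ case is the same without conditioning. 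This disposes of the continuity claim.

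For the identity I would evaluate the displayed formula at $z = Z_i = X_i - V_i$, so that $n = X_i$ and $z - n + 1 = 1 - V_i$; using $F_i(X_i - 1 \mid \cdot) = F'_i(X_i \mid \cdot)$ again,
\[
G_i(Z_i \mid Z_1,\dots,Z_{i-1}) = F'_i(X_i\mid\cdot) + (1 - V_i) J_i(X_i\mid\cdot) = F_i(X_i\mid\cdot) - V_i J_i(X_i\mid\cdot).
\]
This is exactly the quantity $W_i$ of Lemma~\ref{lem:gpit} after the harmless relabeling $V_i \mapsto 1 - V_i$ (which keeps the sequence IID uniform on $(0,1)$ and independent of $\{X_i\}$), and it matches verbatim if one instead takes $Z_i = X_i - 1 + V_i$, which has the same distribution and the same $G_i$. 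The $i=1$ case is identical with the conditioning removed. Combined with the continuous PIT stated just before Lemma~\ref{lem:gpit}, this shows the two procedures — the continuous PIT applied to the noise-injected $Z_i$, and the general PIT applied to the integer-valued $X_i$ — produce the same $W_i$, which is the point of the lemma.

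The hard part will not be any single estimate but the bookkeeping: correctly identifying $F_i(X_i - 1 \mid \cdot)$ with the left limit $F'_i(X_i\mid\cdot)$, keeping the piecewise-linear interpolant $G_i$ aligned with $F_i$ at the integer grid, and making sure the sign convention in $Z_i = X_i - V_i$ versus $X_i - 1 + V_i$ does not flip $V_i$ and $1 - V_i$. The only substantive inputs are the independence of the injected uniform noise and the PIT identity already recorded in Lemma~\ref{lem:gpit}.
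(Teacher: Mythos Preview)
Your proof is correct and follows the same route as the paper's: establish that conditioning on $(Z_1,\dots,Z_{i-1})$ coincides with conditioning on $(X_1,\dots,X_{i-1},V_1,\dots,V_{i-1})$, write $G_i$ as the piecewise-linear interpolation of $F_i$ between consecutive integers, and evaluate at $Z_i$. You are in fact more careful on the final step than the paper: the paper's proof asserts $F_i(X_i-1\mid\cdot)+(1-V_i)J_i(X_i\mid\cdot)=F_i(X_i\mid\cdot)-(1-V_i)J_i(X_i\mid\cdot)$, an algebraic slip whose correct right-hand side is $F_i(X_i\mid\cdot)-V_iJ_i(X_i\mid\cdot)$, precisely the $V_i\mapsto 1-V_i$ discrepancy you flag and resolve.
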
 
\begin{proof} 
For each $i$, $X_i=\lfloor Z_i\rfloor+1$ and $V_i=X_i-Z_i$, so that
conditioning on the 
$Z_i$'s is equivalent to conditioning on the $X_i$'s and the $V_i$'s.
It is straightforward to see that each $G_i$ is the linear
interpolation of $F_i$ between consecutive integers.  That is
\begin{eqnarray} \label{eq:pit3}
&&\\ \nonumber
\lefteqn{G_i(z|z_1,\ldots,z_{i-1})}\\ \nonumber
&=&\bfun F_i(z|x_1,\ldots,x_{i-1})&\mbox{if
  $z$ is an integer,}\\
F_i(\lfloor z\rfloor|x_1,\ldots,x_{i-1})+(z-\lfloor
z\rfloor)J_i(\lfloor
z\rfloor+1|x_1,\ldots,x_{i-1})&\mbox{otherwise.}\efun
\end{eqnarray} 
It follows that 
\begin{eqnarray*} 
\lefteqn{G_i(Z_i|Z_1,\ldots,Z_{i-1})}\\
&=&F_i(X_i-1|X_1,\ldots,X_{i-1})
+(1-V_i)[F_i(X_i|X_1,\ldots,X_{i-1})\\
&&-F_i(X_i-1|X_1,\ldots,X_{i-1})]\\
&=&F_i(X_i|X_1,\ldots,X_{i-1})-(1-V_i)J_i(X_i|X_1,\ldots,X_{i-1})=W_i.
\end{eqnarray*} 
\end{proof} 

Lemma~\ref{lem:both} tells us that we can compute generalized
residuals two equivalent ways if we start with integer-valued random
variables.  One way is to use the general PIT in Lemma~\ref{lem:gpit}
directly on the integer-valued random variables.  If the
integer-valued random variables are the result of rounding up
unobserved continuous random variables, we might prefer to smooth out
the integers over the preceding interval and use the continuous PIT.
Lemma~\ref{lem:both} tells us that we get exactly the same sequence of
generalized residuals either way, so long as we use the same sequence
of uniform random variables for the smoothing as we use for
Lemma~\ref{lem:gpit}. 

There is one further connection between the smoothed and
integer-valued random variables.  They have the same prediction
log-likelihood.  
\begin{lemma}
Assume the same conditions as in Lemma~\ref{lem:both}.  Then for each
integer $x$ and each $z\in(x-1,x)$,
\begin{eqnarray*}
J_i(x|x_1,\ldots,x_{i-1})&=&\Pr(X_i=x|X_1=x_1,\ldots,X_{i-1}=x_{i-1})\\
&=&g_i(z|x_1,\ldots,x_n),
\end{eqnarray*} 
where $g_i$  is the conditional density of $Z_i$ given
$Z_1=z_1,\ldots,Z_{i-1}=z_{i-1}$.
\end{lemma}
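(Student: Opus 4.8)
The plan is to prove the two displayed equalities separately; each follows almost immediately from material already in place, so I expect the argument to be short. The point of the statement is exactly that the prediction log-likelihood of the smoothed process $\{Z_i\}$ agrees term-by-term with that of the original integer-valued process $\{X_i\}$, since $J_i$ is the conditional probability mass that enters the discrete likelihood while $g_i$ is the conditional density that enters the continuous one.

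For the first equality, I would simply unwind the definition of $J_i$ from Lemma~\ref{lem:gpit}: by construction $J_i(x|x_1,\ldots,x_{i-1})=F_i(x|x_1,\ldots,x_{i-1})-F'_i(x|x_1,\ldots,x_{i-1})$, where $F'_i(x|x_1,\ldots,x_{i-1})=\lim_{y\uparrow x}F_i(y|x_1,\ldots,x_{i-1})$ is the left-hand limit of the (right-continuous) conditional CDF. That difference is precisely the size of the jump of $F_i(\cdot|x_1,\ldots,x_{i-1})$ at $x$, i.e. $\Pr(X_i\le x\mid X_1=x_1,\ldots,X_{i-1}=x_{i-1})-\Pr(X_i< x\mid X_1=x_1,\ldots,X_{i-1}=x_{i-1})=\Pr(X_i=x\mid X_1=x_1,\ldots,X_{i-1}=x_{i-1})$. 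Since each $X_i$ is integer valued this jump vanishes at non-integers, which is why the statement restricts to integer $x$.

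For the second equality I would differentiate the explicit formula (\ref{eq:pit3}) for $G_i$ established in the proof of Lemma~\ref{lem:both}. On the open interval $(x-1,x)$ with $x$ an integer one has $\lfloor z\rfloor=x-1$, so that display reads
\[
G_i(z|z_1,\ldots,z_{i-1})=F_i(x-1|x_1,\ldots,x_{i-1})+\bigl(z-(x-1)\bigr)\,J_i(x|x_1,\ldots,x_{i-1}),
\]
an affine function of $z$ with slope $J_i(x|x_1,\ldots,x_{i-1})$. Hence the conditional density $g_i(z|z_1,\ldots,z_{i-1})$, being the derivative of the conditional CDF $G_i$, equals $J_i(x|x_1,\ldots,x_{i-1})$ for every $z\in(x-1,x)$ (and in particular is constant there). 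Chaining this with the first equality yields the stated identities. To match notation, I would note, as in the proof of Lemma~\ref{lem:both}, that conditioning on $Z_1,\ldots,Z_{i-1}$ is equivalent to conditioning on $X_1,\ldots,X_{i-1}$ together with $V_1,\ldots,V_{i-1}$, so $g_i(z|z_1,\ldots,z_{i-1})$ may be rewritten with the $x_j$'s as arguments.

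I do not anticipate any real obstacle. The only point deserving a word is the differentiability of $G_i$ on each $(x-1,x)$ with the claimed derivative, but that is immediate from the piecewise-affine form of (\ref{eq:pit3}); the integer points form a Lebesgue-null set and do not affect the density. A minor bookkeeping nuisance is reconciling the index appearing in the statement (where $g_i$ is written with argument $x_1,\ldots,x_n$) with the conditioning set $z_1,\ldots,z_{i-1}$ of Lemma~\ref{lem:both}, which is just the equivalence of conditioning $\sigma$-fields noted above.
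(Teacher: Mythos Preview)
Your proposal is correct and follows essentially the same route as the paper: the paper's proof simply notes that the result is immediate from~(\ref{eq:pit3}) because $G_i$ is piecewise linear with slope $J_i(x|x_1,\ldots,x_{i-1})$ on each interval $(x-1,x)$, and you spell out exactly this differentiation together with the (definitional) first equality. Your additional remarks on conditioning and on the typographical index in $g_i$ are harmless bookkeeping.
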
 
\begin{proof}
The proof is straightforward from (\ref{eq:pit3}), since $G_i$ is
piecewise linear and the slope of
$G_i$ on the open interval of length 1 to the left of each integer
$x$ is  $J_i(x|x_1,\ldots,x_{i-1})$. 
\end{proof} 

Of course, all of the previous discussion relies on knowing all of the
conditional CDFs of the integer-valued random variables.  In our
modeling of durations, we must estimate these distributions.  Since we
think of the durations as having continuous distributions in an ideal
system, we choose the smoothing method.  First, we smooth the
integer-valued durations, then we estimate the conditional distributions.
Our estimated distributions will not be piecewise linear as are the
$G_i$ in Lemma~\ref{lem:both}.  Since both the training data and test
data are integer-valued, we convert our conditional distributions back
to either discrete or smoothed distributions whichever makes the
desired calculations simpler.
  
\subsection{Random Effect of Smoothing}
To better understand the extent to which smoothing the integer-valued
durations  affects our 
results, we ran experiments in which we repeated the smoothing using
several independent sequences of uniform random variables (the $V_i$'s.) 
We found that the
construction of $\hat f_i$ in step~\ref{step1} and the fit of the
ARFIMA model in step~\ref{step3} produced indistinguishable results
from one smoothing to the next.  Also, the final generalized residuals
behave the same with regard to the various diagnostics and the prediction
log-likelihood from one smoothing to the next. 

Figure~\ref{fig:noiseRobust}\begin{figure}[htb]
\begin{center}
\includegraphics[width = 5.5in, height = 3.0in]{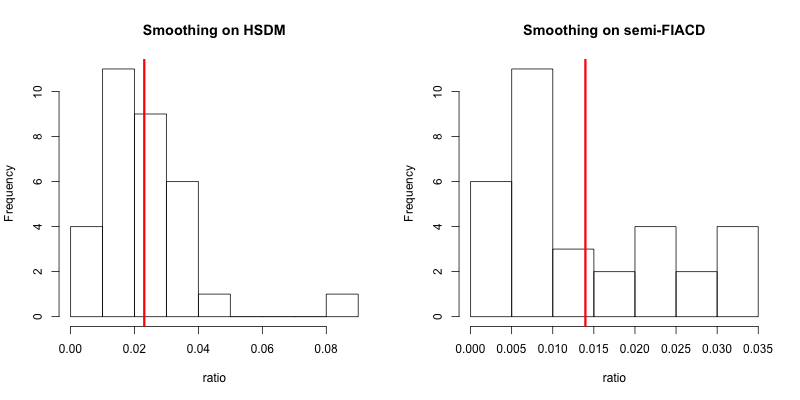}
\caption{ Histograms of ratio (\ref{eq:ratio}). One ratio is
  calculated for each of the 32 stock/date pairs.
The Left subfigure is for HSDM, and the right
  subfigure is for semiparametric-FIACD.}
\label{fig:noiseRobust}
\end{center}
\end{figure}
shows that the improvement of the HSDM model
over the best benchmark model (semiparametric-FIACD) is robust in the
sense of prediction log-likelihood under multiple smoothings of the
integer-valued observations.  We did three smoothings on each of 32
stock/day pairs. The left
subfigure is the histogram of the range of the three different
prediction log-likelihoods from different smoothings as a fraction of the
amount by which the HSDM prediction log-likelihood exceeds that of the
semiparametric-FIACD, i.e. 
\begin{equation}\label{eq:ratio}
{\rm ratio} = \frac{ \max\{ll_1, ll_2, ll_3\}-\min\{ll_1,ll_2,ll_3\} }
{ ll_{\rm HSDM} - ll_{\rm semiparametric-FIACD}},
\end{equation} 
where $ll_i$ is the log-likelihood of HSDM under the $i$th different
smoothing for $i=1,2,3$. The vertical red line is the average. In summary, the
deviation caused by smoothing is only 2 percent of the improvement in prediction
log-likelihood.  
Similarly, the right subfigure shows the distribution of ratio in
(\ref{eq:ratio})  when
smoothing is applied to the semiparametric-FIACD model. The
effect of smoothing on the semiparametric-FIACD model is even smaller. Although
the semiparametric-FIACD model 
does not require smoothing, doing the same smoothing as we do in the HSDM
model gives a fairer comparison between the two models.

\bibliographystyle{imsart-nameyear}
\bibliography{HSDM}

\end{document}